\def\R{\mathbb{R}}
\def\C{\mathbb{C}}
\def\E{\mathbb{E}}
\def\Z{\mathbb{Z}}
\def\ind{\mathbbm{1}}
\def\atan{\mathrm{atan}}
\def\var{\mathrm{var}}
\newtheorem{lemma}{Lemma}
\newtheorem{theorem}{Theorem}
\newtheorem{proposition}{Proposition}
\newtheorem{corollary}{Corollary}
\newtheorem{condition}{Condition}
\title[Analysis of  a non work-conserving GPS queue]{Analysis of a  non-work conserving  Generalized Processor Sharing queue}
\author[F. Guillemin]{Fabrice Guillemin}
\address{Fabrice Guillemin, Orange Labs, 2 Avenue Pierre Marzin, 22300 Lannion}
\email{Fabrice.Guillemin@orange.com}
\date{}
\begin{document}

\begin{abstract}
We consider  in this paper a non work-conserving Generalized Processor Sharing (GPS) system composed of two queues with Poisson arrivals and exponential service times.  Using general results due to Fayolle \emph{et al}, we first establish the stability condition for this system. We then determine the functional equation satisfied by the generating function of the numbers of jobs  in both queues and the associated Riemann-Hilbert problem. We prove the existence and the uniqueness of the solution. This allows us to completely characterize the system, in particular to compute the empty queue probability.  We finally derive the tail asymptotics of the number of jobs in one queue.
\end{abstract}
\keywords{Random walk, quarter plane, Riemann-Hilbert problem, asymptotic analysis}

\maketitle

\section{Introduction}

The Generalized Processor Sharing (GPS) queue is a well known queuing system which has  extensively been studied in the literature in the past decades  for two queues in parallel, see for instance  \cite{Fayolle,GP}. This model has recently  gained renewed interest  in connection with bandwidth sharing in telecommunication networks for elastic flows \cite{Jim9}. A GPS system consists of queues in parallel served according to some weights by a  server with capacity $r$, which usually assumed to be work-conserving. To be more specific, if there are $M$ queues, the service rate at time $t$ of the $i$th queue when not empty is $\phi_i r/\sum_{j=1}^N \phi_j\ind_{\{N_j(t)>0\}}$, where $N_j(t)$ is the number of jobs  in queue $\#j$ at time $t$ and the weights $\phi_j\in (0,1)$ are such that $\sum_{j=1}^M\phi_j=1$. The system is work-conserving in the sense that the global service rate for all queues, which are not empty, is equal to the server rate  $r$.

In this paper, we consider a GPS system composed of two queues in parallel but we assume that $\phi_1+\phi_2>1$. In this case,  when one queue is empty, the service rate for the other queue if not empty is  $r/(\phi_1+\phi_2)$, which less than $r$, and when both queues are not empty, the service rate of queue $\#j$ is $\phi_j r /(\phi_1+\phi_2)$. The system is thus \emph{non work-conserving}. For this system, even the  empty queue probability is  not known since the classical ``$(1-\rho)$'' formula does not hold. In the following, we assume that jobs arrive at queue $\#i$ according to a Poisson process with rate $\lambda_i>0$ and require exponentially distributed service times with mean $1/\nu_i$. The objective of the paper is to explicitly compute under stability conditions the generating function of the numbers of jobs in the stationary regime.

The process $(N_1(t),N_2(t))$ describing the numbers of jobs in queues $\#1$ and $\#2$ is a random walk in the quarter plane. This kind of process has  extensively been studied in the technical   literature, notably see the book by Fayolle \emph{et al} \cite{FIM}, which heavily relies on the analysis of  Riemann surfaces associated with the quadratic kernel appearing when computing  generating functions. Recently, this kind of system has gained renewed interest, notably for the study of asymptotics  \cite{JvL,JvL2,Li11,KM11,Miyazawa11}; see also \cite{KR} for the analysis of associated Green functions.

Under stability conditions, the generating function of the stationary numbers of jobs in queues $\#1$ and $\#2$, defined by $P(x,y)=\E(x^{N_1}y^{N_2})$ for for complex $x$ and $y$ such that $|x|\leq 1$ and $|y|\leq 1$, satisfies a functional equation of the form 
\begin{equation}
\label{eqfund}
h_1(x,y) P(x,y) = h_2(x,y)P(x,0) + h_3(x,y)P(0,y)+h_4(x,y)P(0,0),
\end{equation}
where $h_i(x,y)$, $i=1,...,4$ are at most quadratic polynomials in variables $x$ and $y$. To determine the functions $P(x,0)$ and $P(0,y)$ we show that these functions can be analytically continued to some disks and satisfy Riemann-Hilbert problems of the following form: Find a function $f(z)$ analytic in a disk $D$ and satisfies 
\begin{equation}
\label{RHPgen1}
\Re\left(a(z)f(z)\right)=c(z)
\end{equation}
for $z$ on the circle $C$ delineating the disk $D$, where $a(z)$ and $c(z)$ are some functions depending on the data of the system.

Instead of using the method presented in \cite{FIM} involving Riemann surfaces and a Carleman problem (see Theorem~5.4.3 in that book) or the conformal mapping method successfully used by Blanc in \cite{Blanc}, we show that the above problem can be reduced to a Riemann-Hilbert problem of the following form: Find a function $\phi(z)$ which is sectionally analytic with respect to a circle $C$ and satisfies on the boundary condition
\begin{equation}
\label{RHPgen2}
\phi^i(z) = \alpha(z)\phi^e(z) +\beta(z)
\end{equation}
where $\phi^i(z)$ (resp. $\phi^e(z)$) is the interior (resp. exterior) limit of the function $\phi(z)$ at the circle $C$. We shall show in the following that those latter Riemann-Hilbert problems are with index 0 for the model under consideration. The general form of the solution to such a Riemann-Hilbert problem is given in \cite{Lions}.

As mentioned in \cite{Musk} when $C$ is the unit circle, a solution to Problem~\eqref{RHPgen2} may not be a solution to Problem~\eqref{RHPgen1} and the solution to the former may be written as
$$
f(z) = \frac{1}{2} \left( \phi(z) + \phi^*(z) \right)
$$
where 
$$
\phi^*(z) = \overline{\phi\left(\frac{1}{\overline{z}}\right)},
$$
where  $\phi (z) $ is the solution to Problem~\eqref{RHPgen2}. However, by using the expansion in power series of $z$ at the origin and the fact that radius of convergence of this series is larger than the radius of circle $C$, we are  able to show that the solution to Problem~\eqref{RHPgen2} is indeed the solution to Problem~\eqref{RHPgen1}.

The organization of this paper is as follows: In Section~\ref{model}, we describe the model and establish some preliminary results,  in particular the coefficients of the functional equation~\eqref{eqfund} together with the stability with the stability condition for the system. In Section~\ref{stability}, we establish the Riemann-Hilbert problems associated with the system.  These problems are solved in Section~\ref{rhp} so that we obtain an explicit expression for the generating function $P(x,y)$. This explicit form  is finally  used in Section~\ref{asymptotics} to derive  queue asymptotics. Some conluding remarks are presented in Section~\ref{conclusion}.

\section{Model description and preliminary results}
\label{model}

\subsection{Stability condition}

By setting $\mu_i=\nu_ir/(\phi_1+\phi_2)$, it is easily checked that the process $(N_1(t),N_2(t))$ describing the numbers of jobs  in queues $\#1$ and $\#2$ is a random walk in the quarter plane  with non null transition rates
\begin{align*}
&r_{1,0} = \lambda_1, \; r_{0,1} = \lambda_2, \; r_{-1,0} = \phi_1\mu_1, \; r_{0,-1} = \phi_2 \mu_2,\\
&r'_{1,0} = \lambda_1, \; r'_{0,1} = \lambda_2, \; r'_{-1,0} = \mu_1,\\
&r''_{1,0} = \lambda_1, \; r''_{0,1} = \lambda_2, \; r''_{0,-1} =  \mu_2, \\
&r'''_{1,0} = \lambda_1, \; r'''_{0,1} = \lambda_2,
\end{align*}
where $r_{k,l}$ are the transition rate from state $(i,j)$ to state $(i+k,j+l)$ for $i,j>0$, $r'_{k,l}$ is the transition rate from state $(i,0)$ to state $(i+k,l)$,  $r''_{k,l}$ is the transition rate from state $(0,j)$ to state $(k,j+l)$, and $r'''_{k,l}$ is the transition rate from $(0,0)$ to $(k,l)$.

By using the results in \cite{FayolleStab} (or alternatively the corrected version in \cite{KM12}), the system is stable if and only if
\begin{equation}
\label{stabcondex}
\rho_1+\frac{1-\phi_1}{\phi_2}\rho_2<1 \quad \mbox{and} \quad \frac{1-\phi_2}{\phi_1}\rho_1+\rho_2<1,
\end{equation}
where $\rho_1=\lambda_1/\mu_1$ and $\rho_2=\lambda_2/\mu_2$. In terms of the initial data of the system, this condition reads
$$
\frac{\lambda_1}{\nu_1}+\frac{1-\phi_1}{\phi_2}\frac{\lambda_2}{\nu_2}<\frac{r}{\phi_1+\phi_2}  \quad \mbox{and} \quad \frac{1-\phi_2}{\phi_1}\frac{\lambda_1}{\nu_1}+\frac{\lambda_2}{\nu_2}<\frac{r}{\phi_1+\phi_2}.
$$
Note that when $\phi_1+\phi_2=1$, the stability condition reads
\begin{equation}
\label{stabGPS}
\rho_1+\rho_2<1,
\end{equation}
which is the usual condition for the work conserving GPS system.


\subsection{Fundamental equation}

We assume that Condition~\eqref{stabcondex} holds so that the system is stable. Let $p(n_1,n_2)$ be the stationary probability that there are $n_1$ jobs in queue \#1 and $n_2$ jobs in queue \#2. The balance equations read for $n_1 \geq 0$ and $n_2 \geq 0$:
\begin{multline}
\label{be}
 \lambda_1 \ind_{\{ n_1>0\}} p(n_1-1,n_2) +\lambda_2\ind_{\{ n_2>0\}}p(n_1,n_2-1)  +\phi_1 \mu_1 \ind_{\{ n_2=0\}} p(n_1+1,n_2) \\
+  \phi\mu_2\ind_{\{ n_1>0\}}p(n_1,n_2+1)+ \mu_1 \ind_{\{ n_2=0\}} p(n_1+1,n_2) + \mu_2 \ind_{\{ n_1=0\}} p(n_1,n_2+1) =\\ 
(\lambda_1+\lambda_2+\mu_1 \ind_{\{n_1>0,n_2=0\}} + \mu_2   \ind_{\{n_1=0,n_2>0\}}  +(\phi_1\mu_1+\phi_2\mu_2)\ind_{\{n_1>0,n_2>0\}})  p(n_1,n_2)
\end{multline}

Multiplying the balance equations~\eqref{be} by $x^{n_1}y^{n_2}$ and summing them up, we obtain the functional equation~\eqref{eqfund} for the generating function $P(x,y)$ with
\begin{eqnarray*}
h_1(x,y) &=& -\lambda_1x^2y-\lambda_2xy^2+(\lambda_1+\lambda_2+\phi_1\mu_1+\phi_2\mu_2)xy-\phi_1\mu_1y-\phi_2\mu_2x,\\
h_2(x,y) &=& \phi_2\mu_2x(y-1) - (1-\phi_1)\mu_1y(x-1),\\
h_3(x,y) &=& \phi_1\mu_1y(x-1)-(1-\phi_2)\mu_2x(y-1),\\
h_4(x,y) &=& (1-\phi_1)\mu_1y(x-1)+(1-\phi_2)\mu_2x(y-1).
\end{eqnarray*}
 It is worth noting that 
 \begin{equation}
\label{h2h3h4}
(1-\phi_2)h_2(x,y) +(1-\phi_1)h_3(x,y) +(1-\phi_1-\phi_2)h_4(x,y)=0.
\end{equation}

\subsection{Zero pairs of the kernel}

For fixed $y$, the equation $h_1(x,y) = 0$ in variable $x$ has two roots
\begin{equation}
\label{defXpm}
X_\pm(y) = \frac{-(\lambda_2y^2-(\lambda_1+\lambda_2+\phi_1\mu_1+\phi_2\mu_2)y+\phi_2\mu_2)\pm \sqrt{\Delta_2(y)}}{2\lambda_1y},
\end{equation}
where
$$
\Delta_2(y) = (\lambda_2y^2-(\lambda_1+\lambda_2+\phi_1\mu_1+\phi_2\mu_2)y+\phi_2\mu_2)^2-4\lambda_1\phi_1\mu_1 y^2.
$$

It is easily checked that the discriminant $\Delta_2(y)$ has four real roots $0<y_1<y_2\leq 1 \leq y_3<y_4$ given by
\begin{eqnarray*}
y_1 &=& \frac{(\lambda_1+\lambda_2+\phi_1\mu_1+\phi_2\mu_2+2\sqrt{\lambda_1\phi_1\mu_1})-\sqrt{ \delta_X^{(1)} }}{2\lambda_2},  \\
y_2 &=&  \frac{(\lambda_1+\lambda_2+\phi_1\mu_1+\phi_2\mu_2-2\sqrt{\lambda_1\phi_1\mu_1})-\sqrt{  \delta_X^{(2)}}}{2\lambda_2},\\
y_3 &=&  \frac{(\lambda_1+\lambda_2+\phi_1\mu_1+\phi_2\mu_2-2\sqrt{\lambda_1\phi_1\mu_1})+\sqrt{  \delta_X^{(2)}}}{2\lambda_2},\\
y_4 &=& \frac{(\lambda_1+\lambda_2+\phi_1\mu_1+\phi_2\mu_2+2\sqrt{\lambda_1\phi_1\mu_1})+\sqrt{   \delta_X^{(1)} }}{2\lambda_2},
\end{eqnarray*}
where
\begin{eqnarray*}
\delta_X^{(1)} &=& (\lambda_1+\lambda_2+\phi_1\mu_1+\phi_2\mu_2+2\sqrt{\lambda_1\phi_1\mu_1})^2-4\lambda_2\phi_2\mu_2 ,\\
\delta_X^{(2)} &=& (\lambda_1+\lambda_2+\phi_1\mu_1+\phi_2\mu_2-2\sqrt{\lambda_1\phi_1\mu_1})^2-4\lambda_2\phi_2\mu_2.
\end{eqnarray*}
Note that $y_2=1$ when $\phi_1=\rho_1$ and $\rho_2\leq \phi_2$, and $y_3=1$ when $\phi_1=\rho_1$ and $\rho_2\geq \phi_2$. In addition, the derivative polynomial $\Delta_1'(y)$ has three real roots $y_1^*$, $y_2^*$ and $y_3^*$ such that $0<y_1<y_1^*< y_2<y_2^*< y_3<y^*_3<y_4$. 

The functions $X_\pm(y)$ defined by Equation~\eqref{defXpm} are well defined for $y\in \R\setminus([y_1,y_2]\cup[y_3,y_4])$. By considering the analytic continuation of the square root of the polynomial $\Delta_2(y)$ in $\C\setminus ([y_1,y_2]\cup[y_3,y_4])$, denoted by $\varrho_2(y)$ and such that $\varrho_2(0) \equiv \sqrt{\Delta_2(0)}>0$ (see \cite{FIM,Nova} for details), we can define the functions
\begin{eqnarray}
X^*(y) &=& \frac{-(\lambda_2y^2-(\lambda_1+\lambda_2+\phi_1\mu_1+\phi_2\mu_2)y+\phi_2\mu_2)+\varrho_2(y)}{2\lambda_1y},\label{defXstar}\\
X_*(y) &=& \frac{-(\lambda_2y^2-(\lambda_1+\lambda_2+\phi_1\mu_1+\phi_2\mu_2)y+\phi_2\mu_2)-\varrho_2(y)}{2\lambda_1y}. \label{defXstarstar}
\end{eqnarray}
The function $X^*(y)$ is analytic in $\C\setminus([y_1,y_2]\cup[y_3,y_4])$ and the function $X_*(y)$  is meromorphic  in $\C\setminus([y_1,y_2]\cup[y_3,y_4])$ with a single pole at 0. The function $X^*(y)$ analytically continues in the whole of $\C\setminus([y_1,y_2]\cup[y_3,y_4])$ the function $X_+(y)$ defined for $y<y_1$. Similarly, the function $X_*(y)$ meromorphically continues in the same domain the function $X_-(y)$ defined for $y<y_1$.

Similarly, for fixed $x$, the zeros of the kernel $h_1(x,y)$ in variable $y$ are given by
$$
Y_\pm(x) = \frac{-(\lambda_1x^2-(\lambda_1+\lambda_2+\phi_1\mu_1+\phi_2\mu_2)x+\phi_1\mu_1)\pm \sqrt{\Delta_1(x)}}{2\lambda_2x}, 
$$
where 
$$
\Delta_1(x) = (\lambda_1x^2-(\lambda_1+\lambda_2+\phi_1\mu_1+\phi_2\mu_2)x+\phi_1\mu_1)^2-4\lambda_2\mu_2\phi_2 x^2.
$$
The discriminant $\Delta_1(x)$ has four real roots $0<x_1<x_2\leq 1<x_3 < x_4$ given by
\begin{eqnarray*}
x_1 &=& \frac{(\lambda_1+\lambda_2+\phi_1\mu_1+\phi_2\mu_2+2\sqrt{\lambda_2\phi_2\mu_2})-\sqrt{ \delta_Y^{(1)} }}{2\lambda_1},  \\
x_2 &=&  \frac{(\lambda_1+\lambda_2+\phi_1\mu_1+\phi_2\mu_2-2\sqrt{\lambda_2\phi_2\mu_2})-\sqrt{  \delta_Y^{(2)}}}{2\lambda_1},\\
x_3 &=&  \frac{(\lambda_1+\lambda_2+\phi_1\mu_1+\phi_2\mu_2-2\sqrt{\lambda_2\phi_2\mu_2})+\sqrt{  \delta_Y^{(2)}}}{2\lambda_1},\\
x_4 &=& \frac{(\lambda_1+\lambda_2+\phi_1\mu_1+\phi_2\mu_2+2\sqrt{\lambda_2\phi_2\mu_2})+\sqrt{   \delta_Y^{(1)} }}{2\lambda_1},
\end{eqnarray*}
where
\begin{eqnarray*}
\delta_Y^{(1)} &=& (\lambda_1+\lambda_2+\phi_1\mu_1+\phi_2\mu_2+2\sqrt{\lambda_2\phi_2\mu_2})^2-4\lambda_1\phi_1\mu_1, \\
\delta_Y^{(2)} &=& (\lambda_1+\lambda_2+\phi_1\mu_1+\phi_2\mu_2-2\sqrt{\lambda_2\phi_2\mu_2})^2-4\lambda_1\phi_1\mu_1.
\end{eqnarray*}
Note that $x_2=1$ when $\phi_2=\rho_2$ and $\rho_1\leq \phi_1$, and $x_3=1$ when $\phi_2=\rho_2$ and $\rho_1\geq \phi_1$. The derivative polynomial $\Delta'_1(x)$ has three real roots denoted by $x_1^*$, $x_2^*$ and $x_3^*$ such that $x_1<x_1^* < x_2< x_2^* < x_3 < x_3^*<x_4$.

The functions $Y_\pm(x)$ are defined in $\R\setminus([x_1,x_2]\cup[x_3,x_4])$. By considering the analytic continuation of the square root of the polynomial $\Delta_1(x)$ in $\C\setminus ([x_1,x_2]\cup[x_3,x_4])$, denoted by $\varrho_1(x)$ and such that $\varrho_1(0) \equiv \sqrt{\Delta_1(0)}>0$, we can define the functions
\begin{eqnarray}
\label{defYstar}
Y^*(x) &=&  \frac{-(\lambda_1x^2-(\lambda_1+\lambda_2+\phi_1\mu_1+\phi_2\mu_2)y+\phi_1\mu_1)+\varrho_1(x)}{2\lambda_2x}, \\
Y_*(x) &=&  \frac{-(\lambda_1x^2-(\lambda_1+\lambda_2+\phi_1\mu_1+\phi_2\mu_2)y+\phi_1\mu_1)-\varrho_1(x)}{2\lambda_2x}.\label{defYstarstar}
\end{eqnarray}
The function $Y^*(x)$ is analytic in $\C\setminus([x_1,x_2]\cup[x_3,x_4])$ and the function $Y_*(x)$  is meromorphic  in $\C\setminus([x_1,x_2]\cup[x_3,x_4])$ with a single pole at 0. The function $Y^*(x)$ analytically continues in the whole of $\C\setminus([x_1,x_2]\cup[x_3,x_4])$ the function $Y_+(x)$ defined for $x<x_1$. Similarly, the function $Y_*(x)$ meromorphically continues in the same domain the function $Y_-(x)$ defined for $x<x_1$.

Let $D\left(0,\sqrt{\frac{\phi_2}{\rho_2}}\right)$ denote the disk with center 0 and radius $\sqrt{\frac{\phi_2}{\rho_2}}$ and  $D\left(0,\sqrt{\frac{\phi_1}{\rho_1}}\right)$ be the disk with center 0 and radius $\sqrt{\frac{\phi_1}{\rho_1}}$. A classical result in the theory of random walks in the quarter plane is the following conformal mapping result; see  \cite{FIM} for details.

\begin{proposition}
\label{conformal}
The function $X^*(y)$ is a conformal mapping from the open set  $D\left(0,\sqrt{\frac{\phi_2}{\rho_2}}\right)\setminus[y_1,y_2]$ onto  the open set  $D\left(0,\sqrt{\frac{\phi_1}{\rho_1}}\right)\setminus [x_1,x_2]$. The reciprocal function is $Y^*(x)$.
\end{proposition}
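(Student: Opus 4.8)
The plan is to recover this classical fact (see \cite{FIM}) by a direct argument, avoiding the Riemann-surface machinery, and resting on three elementary properties of the kernel. Throughout write $a=\sqrt{\phi_2/\rho_2}$, $b=\sqrt{\phi_1/\rho_1}$, $c=\lambda_1+\lambda_2+\phi_1\mu_1+\phi_2\mu_2$, $U=D(0,a)\setminus[y_1,y_2]$ and $V=D(0,b)\setminus[x_1,x_2]$. First I would locate the cuts. Factoring
\begin{equation*}
\Delta_2(y)=\bigl(\lambda_2y^2-(c-2\sqrt{\lambda_1\phi_1\mu_1})y+\phi_2\mu_2\bigr)\bigl(\lambda_2y^2-(c+2\sqrt{\lambda_1\phi_1\mu_1})y+\phi_2\mu_2\bigr)
\end{equation*}
identifies $y_2,y_3$ as the roots of the first factor and $y_1,y_4$ as those of the second, so $y_2y_3=y_1y_4=\phi_2\mu_2/\lambda_2=a^2$; under Condition~\eqref{stabcondex} the four roots are pairwise distinct, hence $y_1<y_2<a<y_3<y_4$. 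Thus $[y_1,y_2]\subset D(0,a)$ and $[y_3,y_4]\cap D(0,a)=\emptyset$, and since $X^*$ is holomorphic on $\C\setminus([y_1,y_2]\cup[y_3,y_4])$ with $X^*(0)=0$ (a short computation from $\varrho_2(0)=\phi_2\mu_2$), $X^*$ is holomorphic on $U$. Symmetrically $x_2x_3=x_1x_4=\phi_1\mu_1/\lambda_1=b^2$, $x_2<b<x_3$, and $Y^*$ is holomorphic on $V$.

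Next I would show $X^*(U)\subseteq V$. (i) On $[y_1,y_2]$ one has $\Delta_2\le0$, so there $X^*=\overline{X_*}$; since $X^*(y)X_*(y)\equiv\phi_1\mu_1/\lambda_1=b^2$ identically (product of the two roots in $x$ of the kernel), $|X^*(y)|=b$ on $[y_1,y_2]$. (ii) On the circle $y=ae^{i\theta}$, using $y_1y_4=y_2y_3=a^2$ one computes
\begin{equation*}
\frac{\Delta_2(y)}{y^2}=\lambda_2^2\bigl(2a\cos\theta-(y_1+y_4)\bigr)\bigl(2a\cos\theta-(y_2+y_3)\bigr)\ \ge\ 0,
\end{equation*}
the inequality because $2a\le y_2+y_3\le y_1+y_4$ by the arithmetic--geometric mean; hence $X_\pm(y)$ are real on $|y|=a$, and since $X^*(y)+X_*(y)=(c-2\lambda_2a\cos\theta)/\lambda_1>0$ (as $c>2\sqrt{\lambda_2\phi_2\mu_2}=2\lambda_2a$) and $X^*(y)X_*(y)=b^2>0$, both are positive, the branch fixed by $\varrho_2(0)>0$ being the one that is $\le b$; by the maximum principle (and since $X^*$ is not constant) $|X^*|<b$ on $U$. (iii) If $X^*(y)=x\in[x_1,x_2]$ for some $y\in U$, then $\Delta_1(x)\le0$ forces the two roots of the kernel in $y$ at $x$, namely $y$ and $a^2/y$, to be complex conjugate, i.e. $|y|=a$, impossible for interior $y$. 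Combining (i)--(iii) gives $X^*(U)\subseteq V$; by the same reasoning $Y^*(V)\subseteq U$.

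Finally I would identify the inverse. From $h_1(X^*(y),y)\equiv0$ the number $y$ is one of $Y_+(X^*(y)),Y_-(X^*(y))$; since $X^*(y)\to0$ as $y\to0$ while $Y_*(x)\to\infty$ as $x\to0$, the relation forces $Y^*(X^*(y))=y$ for $y$ near $0$, and then on all of $U$ by analytic continuation, $Y^*\circ X^*$ being holomorphic on the connected set $U$ thanks to $X^*(U)\subseteq V$; symmetrically $X^*(Y^*(x))=x$ on $V$. Hence $X^*:U\to V$ and $Y^*:V\to U$ are mutually inverse bijections, and being holomorphic they are conformal. The place where care is genuinely needed — and which is treated systematically in \cite{FIM} — is the consistent tracking of the two square-root branches: specifically, the verification in (ii) that $X^*$ is the branch with $|X^*|\le b$ on the circle $|y|=a$, and the branch identification $Y^*(X^*(y))\equiv y$ near the origin; everything else is bookkeeping.
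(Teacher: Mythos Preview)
The paper does not actually prove this proposition: it merely records it as ``a classical result in the theory of random walks in the quarter plane'' and refers the reader to \cite{FIM}. Your proposal therefore goes well beyond what the paper does, supplying a self-contained elementary argument in place of a bare citation. The three ingredients you isolate (the product relations $X^*X_*=b^2$, $Y^*Y_*=a^2$ from Vieta; the reality of $X^*$ on the circle $|y|=a$ via the factorisation $\Delta_2(y)/y^2=\lambda_2^2\bigl(2a\cos\theta-(y_1+y_4)\bigr)\bigl(2a\cos\theta-(y_2+y_3)\bigr)$; and the inverse identification by analytic continuation from the origin) are exactly the right ones, and the maximum-modulus/open-mapping packaging is clean.

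The one step you explicitly defer---that on $|y|=a$ the branch $X^*$ is the root with $X^*\le b$---can in fact be settled without appealing to \cite{FIM}. Write $\varrho_2=\lambda_2 f_1 f_2$ with $f_1^2=(y-y_1)(y-y_2)$ on $\C\setminus[y_1,y_2]$ and $f_2^2=(y-y_3)(y-y_4)$ on $\C\setminus[y_3,y_4]$; then $f_1(0)f_2(0)>0$ forces, after tracking the behaviour of each factor at infinity, that $f_1f_2<0$ on the real interval $(y_2,y_3)$, hence $\varrho_2(a)<0$. Consequently $X^*(a)<X_*(a)$, so $X^*(a)<b$; since $\varrho_2(y)/y$ is real and does not vanish along the circle, continuity gives $X^*\le b$ on all of $|y|=a$, and your maximum-principle step goes through. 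With this small addition your argument is complete and fully independent of \cite{FIM}, which is something the paper itself does not attempt.
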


To conclude this section, let us note  that $X_-(1)\leq X_+(1)$ and we have $X_-(1)<1$ if and only if $\rho_1>\phi_1$ and then $X_-(1) = \phi_1/\rho_1$. Similarly, $Y_-(1)<1$ if and only if $\rho_2>\phi_2$ and then $Y_-(1) =\phi_2/\rho_2$. Moreover, we have from Equation~\eqref{eqfund}
\begin{eqnarray}
P(x,1) &=& \frac{-(1-\phi_1)P(x,0)+\phi_1P(0,1)+(1-\phi_1)P(0,0)}{\phi_1-\rho_1x}, \label{Px1}\\
P(1,y) &=& \frac{\phi_2P(1,0)-(1-\phi_2)P(0,y)+(1-\phi_2)P(0,0)}{\phi_2-\rho_2y}.\label{P1y}
\end{eqnarray}
The normalizing condition $P(1,1)=1$ implies that 
\begin{eqnarray*}
-(1-\phi_1)P(1,0)+\phi_1P(0,1)+(1-\phi_1)P(0,0)&=&\phi_1-\rho_1 ,\\
 \phi_2P(1,0)-(1-\phi_2)P(0,1)+(1-\phi_2)P(0,0) &=& \phi_2-\rho_2,
\end{eqnarray*}
from which we deduce that
\begin{eqnarray}
P(1,0) &=& \frac{-\phi_1+(1-\phi_2)\rho_1+\phi_1\rho_2+(1-\phi_2)P(0,0)}{1-\phi_1-\phi_2},\label{P10}\\
P(0,1) &=& \frac{-\phi_2+(1-\phi_1)\rho_2+\phi_2\rho_1+(1-\phi_1)P(0,0)}{1-\phi_1-\phi_2}.\label{P01}
\end{eqnarray}

\section{Associated Riemann-Hilbert problem}
\label{stability}
\subsection{Analytic continuation}

Assume that the system is stable. The function $X^*(y)$ defined by Equation~\eqref{defXstar} is such that $X^*(0)=0$. Hence, there exists a neighborhood $V_Y(0)$ of 0 such that $X^*(V_Y(0))\subset D(0,1)$ and hence $P(X^*(y),0)$ is defined since $P(x,0)$ shall be defined in the closed disk $\overline{D(0,1)}$. From Equation~\eqref{eqfund}, we have for $y \in V_Y(0)$
\begin{equation}
\label{extPx0}
P(X^*(y),0) = -\frac{h_3(X^*(y),y)P(0,y)+h_4(X^*(y),y)P(0,0)}{h_2(X^*(y),y)}.
\end{equation}

Similarly, we can use the function $Y^*(x)$ defined by Equation~\eqref{defYstar} and find a neighborhood $V_X(0)$ of 0 such that we get from Equation~\eqref{eqfund} for $x \in V_X(0)$
\begin{equation}
\label{extP0y}
P(0,Y^*(x)) = -\frac{h_2(x,Y^*(x))P(x,0)+h_4(x,Y^*(x))P(0,0)}{h_3(x,Y^*(x))}.
\end{equation}

The above equations can be used to prove the following result.

\begin{proposition}
\label{analyticext}
If the system is stable, the functions $P(x,0)$ and $P(0,y)$ can analytically be continued in the disks $D\left(0,\sqrt{\frac{\phi_1}{\rho_1}}\right)$ and $D\left(0,\sqrt{\frac{\phi_2}{\rho_2}}\right)$, respectively. Both functions are in addition continuous in the respective closed disk.
\end{proposition}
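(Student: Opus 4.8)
The plan is to bootstrap the local identities~\eqref{extPx0} and~\eqref{extP0y} into a genuine analytic continuation by exploiting the conformal mapping of Proposition~\ref{conformal}. First I would record that $P(x,0)$ is, by its defining power series $\sum_{n\geq 0} p(n,0) x^n$, analytic in the open unit disk and continuous on $\overline{D(0,1)}$; likewise for $P(0,y)$. The key geometric input is that $X^*$ maps $D\bigl(0,\sqrt{\phi_2/\rho_2}\bigr)\setminus[y_1,y_2]$ conformally onto $D\bigl(0,\sqrt{\phi_1/\rho_1}\bigr)\setminus[x_1,x_2]$, with inverse $Y^*$. Since the system is stable we have $\rho_1<\phi_1$ or, if $\rho_1\geq\phi_1$, the stability condition still forces $\sqrt{\phi_1/\rho_1}$ and $1$ to sit in the right order relative to $x_2<1$; in all cases one checks $x_2\leq 1$, so $X^*$ carries a full neighbourhood of $[0, y_2)$ inside $D(0,1)$ where $P(\cdot,0)$ is already known to be analytic. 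That makes the right-hand side of~\eqref{extP0y}, namely
$$
-\frac{h_2(x,Y^*(x))P(x,0)+h_4(x,Y^*(x))P(0,0)}{h_3(x,Y^*(x))},
$$
a well-defined analytic function of $x$ on $D(0,1)\setminus[x_1,x_2]$ except possibly at zeros of $h_3(x,Y^*(x))$; by analytic continuation along the connected set it must agree with $P(0,Y^*(x))$ wherever the latter is already defined, i.e.\ on $V_X(0)$, and hence it \emph{defines} the continuation of $P(0,y)$ to the image $Y^*(D(0,1)\setminus[x_1,x_2])$, which by Proposition~\ref{conformal} contains $D\bigl(0,\sqrt{\phi_2/\rho_2}\bigr)\setminus[y_1,y_2]$.

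Next I would handle the slit $[y_1,y_2]$ (resp.\ $[x_1,x_2]$) and the apparent singularities. For the slit: $X^*$ and $X_*$ are the two branches glued along $[y_1,y_2]$, and on that segment $X^*(y)$ and $X_*(y)$ are complex conjugates of modulus $\leq \sqrt{\phi_1/\rho_1}$; feeding both into~\eqref{extPx0} and using that $h_2,h_3,h_4$ are polynomials with real coefficients shows the two boundary values of the continued $P(0,y)$ from the two sides of $[y_1,y_2]$ coincide, so by Morera/the edge-of-the-wedge argument the continuation is in fact analytic across $[y_1,y_2]$, giving analyticity on the whole disk $D\bigl(0,\sqrt{\phi_2/\rho_2}\bigr)$. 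The only remaining worry is that the denominator $h_2(X^*(y),y)$ in~\eqref{extPx0} (resp.\ $h_3(x,Y^*(x))$) could vanish inside the disk; here one argues that any such zero is removable, because the left-hand side $P(X^*(y),0)$ is a priori bounded near that point (it is a convergent power series evaluated at a point still inside $\overline{D(0,1)}$, using $|X^*(y)|\leq\sqrt{\phi_1/\rho_1}$ and that this bound is $\leq 1$ on the relevant range, or else that $X^*(y)$ stays in the closed unit disk throughout), so a pole is impossible and the singularity must be removable. Continuity up to the boundary circle $|y|=\sqrt{\phi_2/\rho_2}$ then follows because on that circle $X^*$ lands on $|x|=\sqrt{\phi_1/\rho_1}$, and one verifies $\sqrt{\phi_1/\rho_1}\leq 1$ fails in general — so instead continuity is obtained from the explicit formula~\eqref{extPx0}, whose numerator and denominator extend continuously to the closed disk and whose denominator is nonzero there by the removability discussion (any boundary zero being matched by a numerator zero).

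The symmetric argument with the roles of $x$ and $y$ interchanged, using $Y^*$, $Y_*$, $\varrho_1$, and the slit $[x_1,x_2]$, gives the continuation and boundary continuity of $P(x,0)$ in $D\bigl(0,\sqrt{\phi_1/\rho_1}\bigr)$. I expect the main obstacle to be the bookkeeping around the denominators: one must show that the zeros of $h_2(X^*(y),y)$ and $h_3(x,Y^*(x))$ in the respective disks are genuinely cancelled — either they coincide with zeros of the numerator, or they lie outside the disk, and pinning this down may require using the explicit factored forms $h_2(x,y)=\phi_2\mu_2 x(y-1)-(1-\phi_1)\mu_1 y(x-1)$, $h_3(x,y)=\phi_1\mu_1 y(x-1)-(1-\phi_2)\mu_2 x(y-1)$ together with the relation~\eqref{h2h3h4}, and possibly the conformal/univalence property of $X^*$ to control where $X^*(y)=$ the relevant linear combination. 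A secondary subtlety is checking the size inequalities between $\sqrt{\phi_i/\rho_i}$ and $1$ under the stability condition~\eqref{stabcondex} so that $P(X^*(y),0)$ and $P(0,Y^*(x))$ are always being evaluated where the original power series converge; this is where stability is really used.
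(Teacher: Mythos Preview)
Your approach has a genuine gap in the nontrivial case. You claim that $Y^*\bigl(D(0,1)\setminus[x_1,x_2]\bigr)$ contains $D\bigl(0,\sqrt{\phi_2/\rho_2}\bigr)\setminus[y_1,y_2]$, but by Proposition~\ref{conformal} the conformal bijection is between $D\bigl(0,\sqrt{\phi_1/\rho_1}\bigr)\setminus[x_1,x_2]$ and $D\bigl(0,\sqrt{\phi_2/\rho_2}\bigr)\setminus[y_1,y_2]$. Hence your claim holds only when $D(0,1)\supseteq D\bigl(0,\sqrt{\phi_1/\rho_1}\bigr)$, i.e.\ when $\phi_1\leq\rho_1$, which is exactly the trivial case where $P(x,0)$ is already analytic in the target disk. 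When $\phi_1>\rho_1$ (the interesting case), the image $Y^*\bigl(D(0,1)\setminus[x_1,x_2]\bigr)$ is a \emph{proper} subset of the target slit disk, and your edge-of-the-wedge step also breaks: on the slit $[y_1,y_2]$ the branches $X^*(y\pm 0i)$ land on the circle $|x|=\sqrt{\phi_1/\rho_1}>1$, where $P(\cdot,0)$ is not yet defined. You flag the size comparison $\sqrt{\phi_i/\rho_i}$ vs.\ $1$ as a ``secondary subtlety'', but it is the heart of the matter and you offer no mechanism to cross the barrier.

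The paper resolves this with two ideas you are missing. First, it pins down the denominator zeros concretely: combining $h_1(x,y)=0$ with $h_2(x,y)=0$ forces $\rho_1 x+\tfrac{1-\phi_1}{\phi_2}\rho_2 y=1$, which the stability inequality~\eqref{stabcondex} rules out in $\overline{D(0,1)}\times\overline{D(0,1)}$ except at $(0,0),(1,1)$; your removability sketch (``the left-hand side is bounded so a pole is impossible'') is circular, since boundedness of $P(X^*(y),0)$ is only known where $|X^*(y)|\leq 1$. Second, and crucially, the paper does not attempt to fill the two-dimensional slit domain directly: it uses~\eqref{extP0y} only to continue $P(x,0)$ along the \emph{real} segment $\bigl(1,\sqrt{\phi_1/\rho_1}\bigr)$, where $Y^*(x)\in(y_2,1)$ stays in the unit disk, and then invokes the fact that $P(x,0)=\sum p(n,0)x^n$ has nonnegative coefficients, so its radius of convergence equals the first real positive singularity (Pringsheim). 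Analyticity on the real segment $\bigl(0,\sqrt{\phi_1/\rho_1}\bigr)$ therefore forces analyticity on the full disk. This positivity argument is the missing ingredient that lets one pass from a one-dimensional continuation to the whole disk without ever needing $|X^*(y)|\leq 1$ off the real axis.
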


\begin{proof}
If $\phi_1\leq \rho_1$ and  the system is stable, then  the function $P(x,0)$ is analytic in  the disk $D\left(0,\sqrt{\frac{\phi_1}{\rho_1}}\right) \subset D(0,1)$ and  continuous in the closed disk $\overline{D\left(0,\sqrt{\frac{\phi_1}{\rho_1}}\right)}$. 

If $\phi_1>\rho_1$, the function $X^*(y)$ is decreasing from $\sqrt{\frac{\phi_1}{\rho_1}}$ to 1 when the $y$ traverses the segment $[y_2,1]$ and when the system is stable, the equation $h_2(X^*(y),y)=0$ has no solution in $[y_2,1]$. As a matter of fact, we can prove that under Condition~\eqref{stabcondex}, the only solutions to $h_1(x,y)=h_2(x,y)=0$ in $\overline{D(0,1)}\times\overline{D(0,1)}$ are $(0,0)$ and $(1,1)$. Indeed, $h_1(x,y)=0$ reads 
$$
(\phi_1\mu_1-\lambda_1 x)y(x-1) = (\lambda_2y-\phi_2\mu_2)x(y-1)
$$
and by combining with $h_2(x,y)=0$, we obtain if $(x,y) \notin \{ (0,0),(1,1)\}$ that $(x,y)$ has to satisfy
$$
\rho_1 x+\frac{1-\phi_1}{\phi_2}\rho_2 y=1.
$$
But for  $(x,y) \in \overline{D(0,1)}\times\overline{D(0,1)}$,  we have
$$
\left|\rho_1 x+\frac{1-\phi_1}{\phi_2}\rho_2 y\right| \leq \rho_1+\frac{1-\phi_1}{\phi_2}\rho_2<1
$$
under Condition~\eqref{stabcondex} and we cannot have the above equality.  This proves that the only solutions to $h_1(x,y)=h_2(x,y)=0$ in $\overline{D(0,1)}\times\overline{D(0,1)}$ are $(0,0)$ and $(1,1)$ and hence, for $y\in [y_2,1]$, we can define $P(X^*(y),0)$ by using Equation~\eqref{extPx0} for $y \in D\left(0,\sqrt{\frac{\phi_2}{\rho_2}}   \right)$.

Conversely, for $y \in \left[1,\sqrt{\frac{\rho_1}{\phi_1}}\right]$, we have
$$
P(x,0) = -\frac{h_3(x,Y^*(x))P(0,Y^*(x))+h_4(x,Y^*(x))P(0,0)}{h_2(x,Y^*(x))}.
$$
Since the function $Y^*(x)$ is analytic in the interval $\left(1,\sqrt{\frac{\rho_1}{\phi_1}}\right)$, $Y^*(x) \in (y_2,1)$, and $P(0,y)$ is analytic in $D(0,1)$, the above equation shows that the function $P(x,0)$ can analytically be continued in $\left(1,\sqrt{\frac{\phi_1}{\rho_1}}\right)$. Since $P(x,0)$ is analytic in $(0,1)$ and in $\left(1,\sqrt{\frac{\rho_1}{\phi_1}}\right)$, by Riemann's Removable Singularity Theorem \cite{Cartan}, the function $P(x,0)$ is analytic in $\left(0, \sqrt{\frac{\rho_1}{\phi_1}}\right)$. Since the function $P(x,0)$ can be expanded in power series with positive coefficients at the origin, we conclude that the function $P(x,0)$ can analytically be continued in the whole disk $D\left(0,\sqrt{\frac{\phi_1}{\rho_1}}\right)$. 

The continuity in the closed disk comes from the fact that the circle $C\left(0,\sqrt{\frac{\phi_1}{\rho_1}}\right)$, where $C\left(0,\sqrt{\frac{\phi_1}{\rho_1}}\right)$ is the circle with center 0 and radius $\sqrt{\frac{\phi_1}{\rho_1}}$, is the image by $X^\pm(y)$ of the segment $[y_1,y_2]$.

Similar arguments can be used to prove the result for $P(0,y)$.
\end{proof}

It is worth noting that in \cite{FIM}, it is shown that the function $P(x,0)$ can even be meromorphically  continued in $\C\setminus[x_3,x_4]$. We shall come up in the following with the same conclusion via a constructive method, namely by obtaining an explicit form for the function $P(x,0)$.

\subsection{Riemann-Hilbert problems}

From Equation~\eqref{extPx0} and the above proposition,  the function $P(x,0)$ is analytic in $D\left(0,\sqrt{\frac{\phi_1}{\rho_1}}\right)$ and for $x \in C\left(0,\sqrt{\frac{\phi_1}{\rho_1}}\right)$ we have
\begin{equation}
\label{RHPx}
\Re\left( i\frac{h_2(x,Y^*(x))}{h_3(x,Y^*(x))}P(x,0)\right) = \Im\left(\frac{h_4(x,Y^*(x))}{h_3(x,Y^*(x))}P(0,0)  \right).
\end{equation}
since $Y^*(x) \in [y_1,y_2]$.

Similarly,  the function $P(0,y)$ is analytic in $D\left(0,\sqrt{\frac{\phi_2}{\rho_2}}\right)$ and for $x \in C\left(0,\sqrt{\frac{\phi_2}{\rho_2}}\right)$, where $C\left(0,\sqrt{\frac{\phi_2}{\rho_2}}\right)$ is the circle with center 0 and radius $\sqrt{\frac{\phi_2}{\rho_2}}$, we have
\begin{equation}
\label{RHPy}
\Re\left( i\frac{h_3(X^*(y),y)}{h_2(X^*(y),y)}P(0,y)\right) = \Im\left(\frac{h_4(X^*(y),y)}{h_2(X^*(y),y)}P(0,0)  \right).
\end{equation}

In the case $\phi_1+\phi_2=1$ (work conserving GPS queue), the above Riemann-Hilbert problems boil down  to Dirichlet problems and have been analyzed in \cite{GP}.  In the following, we assume that $\phi_1+\phi_2>1$.

By using Equation~\eqref{h2h3h4}, we have
$$
\Im\left(\frac{h_4(x,y)}{h_2(x,y)}\right)= \frac{1-\phi_1}{1-\phi_1-\phi_2}\Re\left(i\frac{h_3(x,y)}{h_2(x,y)}\right).
$$
Equation~\eqref{RHPy} can then be rewritten as
\begin{equation}
\label{RHPyb}
\Re\left( i\frac{h_3(X^*(y),y)}{h_2(X^*(y),y)}\left(P(0,y)-\frac{1-\phi_1}{1-\phi_1-\phi_2}P(0,0)\right)\right) =0.
\end{equation}

Similarly, Equation~\eqref{RHPx} can be rewritten as
\begin{equation}
\label{RHPxb}
\Re\left( i\frac{h_2(x,Y^*(x))}{h_3(x,Y^*(x))}\left(P(x,0)-\frac{1-\phi_2}{1-\phi_1-\phi_2}P(0,0)\right)\right) =0.
\end{equation}

The above Riemann-Hilbert problems are solved in the next section.

\section{Solutions to the Riemann-Hilbert problems}
\label{rhp}

In this section, we focus on Riemann-Hilbert problem~\eqref{RHPyb}. The analysis of problem~\eqref{RHPxb} is completely symmetrical. We show in the following how the Riemann-Hilbert problem~\eqref{RHPyb} (which is of  form~\eqref{RHPgen1}) can be reduced to a problem of  form~\eqref{RHPgen2}.

In the following we use the concept of resultant of two polynomials to determine their common roots.  More precisely, if $f_1(x,y)$ and $f_2(x,y)$ are two polynomials in variables $x$ and $y$ defined by
\begin{eqnarray*}
f_1(x,y)  &=&  a_0(y) + a_1(y) x +\cdots +  a_n(y)x^n, \\ 
f_2(x,y) &=&  b_0(y) + b_1(y) x +\cdots +  b_m(y)x^m,
\end{eqnarray*}
they have a common root $y_0$ if $y_0$ is a root of their resultant $\mathrm{Res}_x(f_1,f_2)(y)$ with respect to variable $x$, which is the determinant
$$
\left| \begin{array}{cccccc}
a_n & \cdots & a_0 & 0 & \cdots & \cdots \\
0 & a_n& \cdots & a_0 & 0 & \cdots \\
\cdots & \cdots  & \cdots  & \cdots  & \cdots  & \cdots \\
\cdots & \cdots & 0 & a_n & \cdots & a_0 \\
b_m & \cdots & b_0 & 0 & \cdots & \cdots \\
0 & b_m & \cdots & b_0 & 0 & \cdots \\
\cdots & \cdots  & \cdots  & \cdots  & \cdots  & \cdots \\
\cdots & \cdots & 0 & b_m & \cdots & b_0 \\
 \end{array} \right| \begin{array}{c} \left\}\begin{array}{c}\\ m \mbox{ rows} \\  \\ \end{array} \right.   \\  \\ \left\}\begin{array}{c}\\ n \mbox{ rows}\\  \\ \end{array} \right.\end{array}
$$
Note that this resultant can also be expressed as a combination $p(x,y)f_1(x,y) +q(x,y)f_2(x,y)$, where $p$ and $q$ are polynomials in variables $x$ and $y$. 

The function $P(0,y)$ is analytic in the open disk $D\left(0,\sqrt{\frac{\phi_2}{\rho_2}}\right)$. By using the reflection principle \cite{Cartan}, the function 
$$
y \mapsto  \overline{P\left(0, \frac{\phi_2}{\rho_2\overline{y}}  \right)}
$$
is analytic on the outside of the closed disk $\overline{D\left(0,\sqrt{\frac{\phi_2}{\rho_2}}\right)}$. It is then easily checked that if we define 
\begin{equation}
\label{FYP0y}
F_Y(y) = \left\{ \begin{array}{ll}
P(0,y)-\frac{1-\phi_1}{1-\phi_1-\phi_2}P(0,0), & y \in D\left(0,\sqrt{\frac{\phi_2}{\rho_2}}\right),\\ \\
 \overline{P\left(0, \frac{\phi_2}{\rho_2\overline{y}}  \right)} -\frac{1-\phi_1}{1-\phi_1-\phi_2}P(0,0) & y \in \C\setminus \overline{D\left(0,\sqrt{\frac{\phi_2}{\rho_2}}\right)},
\end{array}\right.
\end{equation}
the function $F_Y(y)$ is sectionally analytic with respect to the circle $C\left(0,\sqrt{\frac{\phi_2}{\rho_2}}\right)$, $F_Y(y)$ tends to $\frac{-\phi_2}{1-\phi_1-\phi_2}P(0,0)$ when $y \to \infty$, and for $y \in C\left(0,\sqrt{\frac{\phi_2}{\rho_2}}\right)$
\begin{equation}
\label{Hilberty}
F^i_Y(y) = \alpha_Y(y)F^e_Y(y),
\end{equation}
where $F^i_Y(y)$ (resp. $F^e_Y(y)$) is the interior (resp. exterior) limit of the function $F_Y(y)$ at the circle $C\left(0,\sqrt{\frac{\phi_2}{\rho_2}}\right)$, and the function $\alpha_Y(y)$ is defined on $C\left(0,\sqrt{\frac{\phi_2}{\rho_2}}\right)$ by
\begin{equation}
\label{defalphayini}
\alpha_Y = \frac{\overline{a_Y(y)}}{a_Y(y)}
\end{equation}
with
$$
a_Y(y) = \frac{h_3(X^*(y),y)}{h_2(X^*(y),y)}.
$$
Problem~\eqref{Hilberty} is clearly of form~\eqref{RHPgen2}.

If we show that Problem~\eqref{Hilberty} has a unique solution $F_Y(y)$, then $P(0,y)$ and $F_Y(y)$ are related to each other according to Equation~\eqref{FYP0y} in the disk $D\left(0,\sqrt{\frac{\phi_2}{\rho_2}}\right)$.

The generic solutions to the Riemann-Hilbert problems of form~\eqref{RHPgen2}  are given in \cite{Lions}. We first have to determine the index of the problem defined as
$$
\kappa_Y = \frac{1}{2\pi}\var_{y \in C\left(0,\sqrt{\frac{\phi_2}{\rho_2}}\right)} \arg \alpha_Y(y)  .
$$

Let us first study the function $a_Y(y)$, which  can be expressed as follows. 

\begin{lemma}
The function  $\alpha_Y(y)$ defined for $y \in C\left(0,\sqrt{\frac{\phi_2}{\rho_2}}\right)$ by Equation~\eqref{defalphayini} can be extended as a meromorphic function in $\C\setminus([y_1,y_2]\cup[y_3,y_4])$ bet setting
\begin{equation}
\label{defalphaY}
\alpha_Y(y) = \frac{-\mu_1(1-\phi_1-\phi_2)\phi_2\mu_2X^*(y) + yR_Y(X^*(y))}{ y(-\lambda_2\mu_1X^*(y)(1-\phi_1-\phi_2)y + R_Y(X^*(y))) },
\end{equation}
where
\begin{multline}
\label{defRY}
R_Y(x) = (1-\phi_2)\lambda_1(\phi_2\mu_2-(1-\phi_1)\mu_1)x^2 \\ +\left((1-\phi_1)(1-\phi_2)(\lambda_1+\lambda_2)-  \phi_1(\phi_2\mu_2-(1-\phi_1)\mu_1)\right)\mu_1x -\phi_1(1-\phi_1)\mu_1^2.
\end{multline}
\end{lemma}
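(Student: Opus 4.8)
The plan is to turn $\alpha_Y$, which a priori is defined only on the circle $C\left(0,\sqrt{\frac{\phi_2}{\rho_2}}\right)$ by $\alpha_Y(y)=\overline{a_Y(y)}/a_Y(y)$, into a genuine rational expression in $X^*(y)$ and $y$, and then to simplify it by means of the kernel relation $h_1(X^*(y),y)=0$. First, for $y$ on this circle one has $|y|^2=\frac{\phi_2}{\rho_2}=\frac{\phi_2\mu_2}{\lambda_2}$, hence $\overline{y}=\frac{\phi_2\mu_2}{\lambda_2 y}$; moreover, by Proposition~\ref{conformal} together with the fact recalled earlier that $X^\pm$ maps $[y_1,y_2]$ onto $C\left(0,\sqrt{\frac{\phi_1}{\rho_1}}\right)$, the map $X^*$ carries $C\left(0,\sqrt{\frac{\phi_2}{\rho_2}}\right)$ onto the real segment $[x_1,x_2]$, so $\overline{X^*(y)}=X^*(y)$ there. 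Since $h_2$ and $h_3$ have real coefficients, this yields, for $y$ on the circle,
\[
\overline{a_Y(y)}=\frac{\overline{h_3(X^*(y),y)}}{\overline{h_2(X^*(y),y)}}=\frac{h_3\!\left(X^*(y),\tfrac{\phi_2\mu_2}{\lambda_2 y}\right)}{h_2\!\left(X^*(y),\tfrac{\phi_2\mu_2}{\lambda_2 y}\right)},\qquad
\alpha_Y(y)=\frac{h_3\!\left(X^*(y),\tfrac{\phi_2\mu_2}{\lambda_2 y}\right)h_2(X^*(y),y)}{h_2\!\left(X^*(y),\tfrac{\phi_2\mu_2}{\lambda_2 y}\right)h_3(X^*(y),y)}.
\]
As the right‑hand side is a rational function of $X^*(y)$ and $y$, and $X^*$ is analytic in $\C\setminus([y_1,y_2]\cup[y_3,y_4])$, this formula already furnishes a meromorphic continuation of $\alpha_Y$ to that domain; it only remains to put it in the form \eqref{defalphaY}.

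To do that I would write $x=X^*(y)$ and substitute $\overline y=\tfrac{\phi_2\mu_2}{\lambda_2 y}$, which a direct computation puts in the shape
\[
h_2\!\left(x,\tfrac{\phi_2\mu_2}{\lambda_2 y}\right)=\frac{\phi_2\mu_2}{\lambda_2 y}\,G_2(x,y),\qquad
h_3\!\left(x,\tfrac{\phi_2\mu_2}{\lambda_2 y}\right)=\frac{\mu_2}{\lambda_2 y}\,G_3(x,y),
\]
with $G_2(x,y)=-\lambda_2 xy+(\phi_2\mu_2-(1-\phi_1)\mu_1)x+(1-\phi_1)\mu_1$ and $G_3(x,y)=(1-\phi_2)\lambda_2 xy+\phi_2(\phi_1\mu_1-(1-\phi_2)\mu_2)x-\phi_1\phi_2\mu_1$; cancelling $\mu_2/(\lambda_2 y)$ then reduces the formula above to $\alpha_Y(y)=\dfrac{G_3(x,y)\,h_2(x,y)}{\phi_2\,G_2(x,y)\,h_3(x,y)}$. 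Both $G_3(x,y)h_2(x,y)$ and $G_2(x,y)h_3(x,y)$ are quadratic in $x$, so the next move is to use the kernel identity
\[
\lambda_1x^2y+\lambda_2xy^2=(\lambda_1+\lambda_2+\phi_1\mu_1+\phi_2\mu_2)xy-\phi_1\mu_1y-\phi_2\mu_2x ,
\]
valid because $x=X^*(y)$, to eliminate the $x^2$‑monomials, collect the remaining terms, and factor out the common polynomial; that common factor is exactly $R_Y(x)$ of \eqref{defRY}, and the resulting quotient is \eqref{defalphaY}. As an internal check, the numerator of \eqref{defalphaY} minus $y$ times the bracket in its denominator equals $\mu_1(1-\phi_1-\phi_2)\,x\,(\lambda_2y^2-\phi_2\mu_2)$, which vanishes at $y=\pm\sqrt{\phi_2/\rho_2}$, where indeed $\alpha_Y=1$, as it must since $a_Y$ takes real values at real $y$.

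The only genuinely delicate step is this last reduction: one must expand two quadratic‑in‑$x$ products, substitute the kernel relation with care (it removes only the combination $\lambda_1x^2y+\lambda_2xy^2$, never a lone $x^2$), and recognise $R_Y$ as the common factor, all the while tracking the sign of $1-\phi_1-\phi_2<0$. Identity \eqref{h2h3h4} linking $h_2,h_3,h_4$, and the representations $\mathrm{Res}_x(h_1,h_i)=p(x,y)h_1(x,y)+q(x,y)h_i(x,y)$ of the resultants discussed just above, are convenient bookkeeping tools for organising it; apart from this computation, the argument is entirely structural.
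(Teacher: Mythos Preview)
Your approach is correct and arrives at the same formula, but the computational organisation differs from the paper's. You conjugate first and then simplify: you write $\alpha_Y$ directly as the cross--ratio $\dfrac{h_3(x,\overline y)\,h_2(x,y)}{h_2(x,\overline y)\,h_3(x,y)}$, substitute $\overline y=\phi_2\mu_2/(\lambda_2 y)$, and propose to reduce both numerator and denominator via the kernel. The paper instead simplifies $a_Y$ before conjugating: it multiplies $a_Y=h_3/h_2$ top and bottom by the polynomial $q_Y(x,y)$ for which $q_Y h_2\equiv\mathrm{Res}_y(h_1,h_2)(x)$ modulo $h_1$. Since this resultant depends on $x$ alone (hence is real on the circle), one gets $\alpha_Y=\overline{q_Y h_3}/(q_Y h_3)$, so only the single product $q_Y h_3$ needs to be simplified; the paper then uses the kernel in the form $\lambda_2 xy^2=\alpha_2(x)y-\phi_2\mu_2 x$ to kill the $y^2$--term and land on $\mu_2 x(x-1)\bigl(-\lambda_2\mu_1(1-\phi_1-\phi_2)xy+R_Y(x)\bigr)$, from which \eqref{defalphaY} follows by conjugating $y$. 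Your route works too but doubles the algebra, since both $G_3h_2$ and $G_2h_3$ must be reduced.

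Two small points of exposition: the kernel relation is more naturally used here to eliminate the $y^2$--terms (your products $G_3h_2$ and $G_2h_3$ are quadratic in $y$), not the $x^2$--terms as you write; indeed the target \eqref{defalphaY} still contains $x^2$ through $R_Y(x)$. And $R_Y$ is not a ``common factor'' in the multiplicative sense---it appears as an additive constituent of both numerator and denominator after reduction, which is exactly what your sanity check (numerator minus $y\times$bracket $=\mu_1(1-\phi_1-\phi_2)x(\lambda_2y^2-\phi_2\mu_2)$) confirms.
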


\begin{proof}
Let us introduce the resultant $\mathrm{Res}_y(h_1,h_2)(x)$ of the polynomials $h_1(x,y)$ and $h_2(x,y)$, which is given by the determinant
$$
\mathrm{Res}_y(h_1,h_2)(x) = \left|
\begin{array}{ccc}
-\lambda_2 x &\alpha_2(x) & -\phi_2\mu_2x \\
 \beta_2(x) & -\phi_2\mu_2x &0\\
0 &\beta_2(y) & -\phi_2\mu_2x
\end{array}
\right|,
$$
where 
\begin{eqnarray*}
\alpha_2(x) &=&  -(\lambda_1x^2-(\lambda_1+\lambda_2+\phi_1\mu_1+\phi_2\mu_2)x+\phi_1\mu_1)\\
\beta_2(x) &=&\phi_2\mu_2x +(1-\phi_1)\mu_1(1-x) .
\end{eqnarray*}

The  polynomial $\mathrm{Res}_y(h_1,h_2)(x)$ can be written as
\begin{equation}
\label{resyh1h2x}
\mathrm{Res}_y(h_1,h_2)(x) = q_Y(x,y)h_2(x,y)
\end{equation}
where $q_Y(x,y)$ is a polynomial in $x$ and $y$. It follows that 
$$
a_Y(y) = \frac{q_Y(x,y)h_3(x,y)}{\mathrm{Res}_y(h_1,h_2)(x)}.
$$
Since for $y \in C\left(0,\sqrt{\frac{\phi_2}{\rho_2}}\right)$ and $x=X^*(y) \in [x_1,x_2]$, we have
$$
\alpha_Y(y) = \frac{\overline{q_Y(x,y)h_3(x,y)}}{q_Y(x,y)h_3(x,y)}.
$$

Simple computations show that 
$$
q_Y(x,y) = \lambda_2xy\beta_2(x)+\lambda_2\phi_2\mu_2x^2-\alpha_2(x)\beta_2(x) .
$$

Writing $h_3(x,y)  = (1-\phi_2)\mu_2 x -\beta_3(x)y $ with $\beta_3(x) = (1-\phi_2)\mu_2 x+\phi_1\mu_1(1-x)$, we have for $y \in C\left(0,\sqrt{\frac{\phi_2}{\rho_2}}\right)$ and $x=X^*(y)$
\begin{multline*}
q_Y(x,y) h_3(x,y)= -\lambda_2\beta_2(x)\beta_3(x)xy^2\\ +\left((1-\phi_2)\mu_2\lambda_2\beta_2(x)x^2+\alpha_2(x)\beta_2(x)\beta_3(x)-\lambda_2\phi_2\mu_2\beta_3(x)x^2  \right)y  \\ +(1-\phi_2)\mu_2x(\lambda_2\phi_2\mu_2x^2-\alpha_2(x)\beta_2(x)).
\end{multline*}
By using the fact that $\lambda_2xy^2 = \alpha_2(x)y-\mu_2\phi_2x$, we obtain
\begin{multline*}
q_Y(x,y) h_3(x,y)= \lambda_2\mu_2x^2\left((1-\phi_2)\beta_2(x) - \phi_2\beta_3(x) \right)y  \\ +\phi_2\mu_2\beta_2(x)\beta_3(x)x +(1-\phi_2)\mu_2x(\lambda_2\phi_2\mu_2x^2-\alpha_2(x)\beta_2(x)).
\end{multline*}
Simple computations show that 
$$
(1-\phi_2)\beta_2(x) - \phi_2\beta_3(x) =(1-\phi_1-\phi_2)\mu_1(1-x).
$$
In addition,
$$
\phi_2\mu_2\beta_2(x)\beta_3(x)x +(1-\phi_2)\mu_2x(\lambda_2\phi_2\mu_2x^2-\alpha_2(x)\beta_2(x))= \mu_2x(x-1) R_Y(x),
$$
where the polynomial $R_Y(x)$ is defined by Equation~\eqref{defRY}. 

It follows that
$$
q_Y(x,y) h_3(x,y) = \mu_2x(x-1)\left(-\lambda_2\mu_1x(1-\phi_1-\phi_2)y + R_Y(x)\right)
$$
and Equation~\eqref{defalphaY} follows.
\end{proof}

By using the above lemma, we can now determine the index of the Riemann-Hilbert problem~\eqref{Hilberty}.

\begin{lemma}
Under Condition~\eqref{stabcondex}, the index of the Riemann-Hilbert problem~\eqref{Hilberty} is $\kappa_Y = 0$.
\end{lemma}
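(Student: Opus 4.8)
The plan is to compute the index $\kappa_Y=\frac{1}{2\pi}\var_{y\in C}\arg\alpha_Y(y)$, with $C:=C\!\left(0,\sqrt{\phi_2/\rho_2}\right)$, by first putting $\alpha_Y$ into a form that makes $|\alpha_Y|\equiv1$ manifest and then using the geometry of $X^*$ to reduce the whole statement to an elementary sign comparison. On $C$ one has $X^*(y)\in[x_1,x_2]\subset(0,1]$, so $X^*(y)$ and $R_Y(X^*(y))$ are real, while $\phi_2\mu_2=\lambda_2(\phi_2/\rho_2)=\lambda_2 y\overline{y}$. Substituting these into \eqref{defalphaY}, the numerator factors as $y\bigl(\lambda_2\mu_1(\phi_1+\phi_2-1)X^*(y)\overline{y}+R_Y(X^*(y))\bigr)$ and the denominator as $y\bigl(\lambda_2\mu_1(\phi_1+\phi_2-1)X^*(y)y+R_Y(X^*(y))\bigr)$, whence
\[
\alpha_Y(y)=\frac{\overline{w(y)}}{w(y)},\qquad w(y):=c\,X^*(y)\,y+R_Y(X^*(y)),\quad y\in C,
\]
where $c:=\lambda_2\mu_1(\phi_1+\phi_2-1)>0$ because $\phi_1+\phi_2>1$. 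Hence $|\alpha_Y|\equiv1$ on $C$, $\kappa_Y$ is automatically even, and $\kappa_Y=-\frac{1}{\pi}\var_{y\in C}\arg w(y)=-2n$, where $n$ is the winding number of the closed curve $w(C)$ about the origin, provided $w$ does not vanish on $C$.

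\emph{The geometry of $w$ on $C$.} Writing $y=\sqrt{\phi_2/\rho_2}\,e^{i\theta}$ and using $c\,X^*(y)>0$ with $R_Y(X^*(y))\in\R$,
\[
\Im w(y)=c\,X^*(y)\,\Im y=c\,X^*(y)\sqrt{\phi_2/\rho_2}\,\sin\theta ,
\]
so $\Im w(y)>0$ for $\theta\in(0,\pi)$ and $<0$ for $\theta\in(\pi,2\pi)$: on the open upper (resp. lower) semicircle $w(y)$ lies strictly in the upper (resp. lower) half-plane, and $w(C)$ meets the real axis only at the two points $w\!\left(\sqrt{\phi_2/\rho_2}\right)$ and $w\!\left(-\sqrt{\phi_2/\rho_2}\right)$. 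In particular $w$ can vanish on $C$ only at those two real points; and if it does not, then the curve $w(C)$, crossing $\R$ exactly at those two points, has winding number $0$ about the origin precisely when the two values have the same sign. Thus the lemma is equivalent to the assertion that $w\!\left(\sqrt{\phi_2/\rho_2}\right)$ and $w\!\left(-\sqrt{\phi_2/\rho_2}\right)$ are nonzero and carry the same sign.

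\emph{The sign comparison and the main obstacle.} Since $y_2y_3=\phi_2/\rho_2$ (product of the inner roots of $\Delta_2$), the points $\pm\sqrt{\phi_2/\rho_2}$ lie strictly between the cuts $[y_1,y_2]$ and $[y_3,y_4]$, so $X^*$ is analytic there; by Proposition~\ref{conformal} and the boundary correspondence, $X^*$ maps $C$ onto $[x_1,x_2]$, folding at the two real points onto the endpoints, so $X^*\!\left(\sqrt{\phi_2/\rho_2}\right)=x_\alpha$ and $X^*\!\left(-\sqrt{\phi_2/\rho_2}\right)=x_\beta$ with $\{x_\alpha,x_\beta\}=\{x_1,x_2\}$. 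Then
\[
w\!\left(\sqrt{\phi_2/\rho_2}\right)=c\,x_\alpha\sqrt{\phi_2/\rho_2}+R_Y(x_\alpha),\qquad w\!\left(-\sqrt{\phi_2/\rho_2}\right)=-c\,x_\beta\sqrt{\phi_2/\rho_2}+R_Y(x_\beta),
\]
and it remains to show, using the explicit form \eqref{defRY} of $R_Y$ (and $x_2x_3=\phi_1/\rho_1$ for the branch points) together with Condition~\eqref{stabcondex}, that these two real numbers are nonzero and of the same sign. This verification is the main obstacle, and the only place where the full stability condition \eqref{stabcondex} — not merely $\phi_1+\phi_2>1$ — is genuinely needed. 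A structurally cleaner route I would also consider is to observe that the stability region \eqref{stabcondex} is connected in the parameters $(\rho_1,\rho_2,\phi_1,\phi_2)$, that $\alpha_Y$ depends continuously on them and stays nonvanishing on $C$ throughout (this being exactly the nonvanishing of $w$ at the two real points), so that the integer $\kappa_Y$ is constant there and can be evaluated at one convenient (e.g. light-traffic or symmetric) parameter value, where $w(C)$ visibly does not encircle $0$.
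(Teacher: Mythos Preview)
Your reduction is sound and elegant up to the point you label ``the main obstacle'': writing $\alpha_Y=\overline{w}/w$ with $w(y)=cX^*(y)y+R_Y(X^*(y))$, observing that $\Im w(y)$ has the sign of $\Im y$, and concluding that $\kappa_Y=0$ is equivalent to $w(\sqrt{\phi_2/\rho_2})$ and $w(-\sqrt{\phi_2/\rho_2})$ being nonzero of the same sign. All of this is correct. The gap is that you stop there: the sign verification is the heart of the lemma, and neither your direct route nor your homotopy route actually carries it out. The continuity argument does not sidestep the difficulty, because it needs precisely the nonvanishing of $w$ at the two real points for \emph{every} parameter value in the stability region --- which is the same inequality you would have to prove directly.

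The paper closes this gap by a different and in fact stronger manoeuvre: instead of checking two isolated points, it shows $\Re w(y)<0$ for \emph{all} $y\in C$. Since $X^*(y)=x\in[x_1,x_2]$ is real and $\Re y$ is the real part of $Y^*(x\pm 0i)$, one can write
\[
\Re w(y)=c\,x\,\Re y+R_Y(x)=\tfrac12(1-\phi_1-\phi_2)\mu_1\bigl(\lambda_1x^2-(\lambda_1+\lambda_2+\phi_1\mu_1+\phi_2\mu_2)x+\phi_1\mu_1\bigr)+R_Y(x)=:f_Y(x),
\]
a genuine quadratic polynomial in $x$ alone. The paper then checks $f_Y(0)<0$, uses Condition~\eqref{stabcondex} to get $f_Y(1)<0$ (splitting into $\rho_2\gtrless\phi_2$), and analyzes the leading coefficient to conclude $f_Y(x)<0$ on all of $[0,1]\supset[x_1,x_2]$. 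This places $w(C)$ entirely in the left half-plane, so the winding number is $0$. Note that your two missing values are exactly $f_Y(x_2)$ and $f_Y(x_1)$, so the paper's computation is precisely what completes your argument; conversely, your $\Im w$ observation is not needed once one has $\Re w<0$ throughout. The trade-off: your route is geometrically cleaner and would in principle require checking only two points, but those points involve the explicit branch values $x_1,x_2$; the paper's route avoids the branch values at the cost of a full sign analysis of the quadratic $f_Y$ on $[0,1]$, where the stability condition enters naturally via $f_Y(1)<0$.
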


\begin{proof}
For $y \in C\left(0,\sqrt{\frac{\phi_2}{\rho_2}}\right)$ and $x = X^*(y)$, let the function $f_Y(x)$ be defined by 
\begin{multline*}
 f_Y(x) = -\lambda_2\mu_1x(1-\phi_1-\phi_2)\Re(y) + R_Y(x) \\
=  \frac{1}{2}(1-\phi_1-\phi_2)\mu_1\left(\lambda_1x^2-(\lambda_1+\lambda_2+\phi_1\mu_1+\phi_2\mu_2)x+\phi_1\mu_1\right) +R_Y(x).
\end{multline*}
The function $f_Y(x)$ is a  quadratic polynomial such that 
$$
f_Y(0) =  \frac{1}{2}(1-\phi_1-\phi_2)\phi_1\mu_1^2-\phi_1(1-\phi_1)\mu_1^2<0
$$
and
\begin{multline*}
f_Y(1) = \\   \mu_1\mu_2\left((1-\phi_2)\phi_2\rho_1+(1-\phi_1)(1-\phi_2)\rho_2-\phi_1\phi_2-\frac{(1-\phi_1-\phi_2)(\phi_2+\rho_2)}{2}\right).
\end{multline*}

If $\rho_2 \geq\phi_2$
\begin{multline*}
f_Y(1) \leq \\  \mu_1\mu_2\left((1-\phi_2)\phi_2\rho_1+(1-\phi_1)(1-\phi_2)\rho_2-\phi_1\phi_2-(1-\phi_1-\phi_2)\rho_2 \right)\\
\leq  \mu_1\mu_2\phi_1\phi_2\left(\frac{1-\phi_2}{\phi_1}\rho_1+\rho_2-1  \right)<0
\end{multline*}
under Condition~\eqref{stabcondex}. If $\rho_2<\phi_2$,
\begin{multline*}
f_Y(1) \leq \\   \mu_1\mu_2\left((1-\phi_2)\phi_2\rho_1+(1-\phi_1)(1-\phi_2)\rho_2-\phi_1\phi_2-(1-\phi_1-\phi_2)\phi_2 \right)\\
\leq  \mu_1\mu_2(1-\phi_2)\phi_2\left(\rho_1+\frac{1-\phi_1}{\phi_2}\rho_2-1  \right)<0
\end{multline*}
under Condition~\eqref{stabcondex}. We hence deduce that we always have $f_Y(1)<0$  under Condition~\eqref{stabcondex}.

The coefficient of the leading term of the polynomial $f_Y(x)$ is $c_Y$ given by
$$
c_Y =\left(\frac{1}{2}(1-\phi_1-\phi_2)\mu_1+(1-\phi_2)(\phi_1\mu_2-(1-\phi_1)\mu_1)\right).
$$

If $c_Y>0$ then the polynomial $f_Y(x)$ has two roots with opposite signs since $f_Y(0)<0$. The fact that $f_Y(1)<0$ implies that $f_Y(x)<0$ for all $x$ in $[0,1]$. If $c_Y<0$, then either the quadratic polynomial has no roots and is always negative or else has two roots. If the roots exist, then they are either both positive or both negative. If they are negative, then $f_Y(x)$ is negative for all $x\in [0,1]$. If the roots are positive, then they are either both in $(0,1)$ or else greater than 1 since $f_Y(1)<0$. The product of the roots is equal to 
\begin{multline*}
\frac{(\phi_1(1-\phi_1)+\frac{1}{2}(\phi_1+\phi_2-1)\phi_1)\mu_1^2}{(((1-\phi_1)(1-\phi_2)\mu_1+ \frac{1}{2}(\phi_1+\phi_2-1)\phi_1)\mu_1-(1-\phi_2)\phi_2\mu_2)\lambda_1}> \\\frac{1}{\rho_1}\frac{\phi_1(1-\phi_1)+\frac{1}{2}(\phi_1+\phi_2-1)\phi_1}{(1-\phi_1)(1-\phi_2)+ \frac{1}{2}(\phi_1+\phi_2-1)\phi_1}>  \\ \frac{\phi_1(1-\phi_1)+\frac{1}{2}(\phi_1+\phi_2-1)\phi_1}{(1-\phi_1)(1-\phi_2)+ \frac{1}{2}(\phi_1+\phi_2-1)\phi_1}>1
\end{multline*}
since $\rho_1<1$ under Condition~\eqref{stabcondex} and $\phi_1+\phi_2>1$. Hence, under the condition $c_Y<0$, if the roots of the polynomial $f_Y(x)$ exist and are positive, they are both greater than 1. This shows that in all cases $f_Y(x)$ is negative for $x \in [0,1]$.

It then follows that when $y$ traverses the circle $C(0,\sqrt{\phi_2/\rho_2})$,   the quantity $-\lambda_2\mu_1X^*(y)(1-\phi_1-\phi_2)\Re(y) + R_Y(X^*(y)) \leq 0$ and then the closed contour described by $-\lambda_2\mu_1X^*(y)(1-\phi_1-\phi_2)y + R_Y(X^*(y))$ entirely lies in the half plane $\{z : \Re(z)<0\}$. This implies  that $\kappa_Y = 0$.
\end{proof}

Since the index of the Riemann-Hilbert~\eqref{Hilberty} is null, the solution is as follows.

\begin{lemma}
\label{lemHilbert}
The solution to the Riemann-Hilbert problem~\eqref{Hilberty} exists and is unique and  given for $y \in \C \setminus C\left(0,\sqrt{\frac{\phi_2}{\rho_2}}\right)$ by
\begin{equation}
\label{defFy}
F_Y(y) =  \frac{-\phi_2}{1-\phi_1-\phi_2} P(0,0)  \varphi_Y(y),
\end{equation}
where
\begin{equation}
\label{defvarphiy}
\varphi_Y(y) = \exp\left( \frac{y}{\pi} \int_{x_1}^{x_2}\frac{(\lambda_1x^2-\phi_1\mu_1)\Theta_Y(x)}{xh_1(x,y)}dx  \right)
\end{equation}
and
\begin{multline}
\label{defThetaY}
\Theta_Y(x) = \\  \atan\left(\frac{\mu_1 (\phi_1+\phi_2-1)\sqrt{-D_1(x)}}{\mu_1(\phi_1+\phi_2-1)(\lambda_1x^2-(\lambda_1+\lambda_2+\phi_1\mu_1+\phi_2\mu_2)x+\phi_1\mu_1)-2R_Y(x)}  \right) .
\end{multline}
\end{lemma}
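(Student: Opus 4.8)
The plan is to read~\eqref{Hilberty} as a homogeneous Riemann--Hilbert problem of index $\kappa_Y=0$ on $C\left(0,\sqrt{\frac{\phi_2}{\rho_2}}\right)$, solve it by the standard index-$0$ theory (see \cite{Lions}), and then rewrite the resulting Cauchy-type integral over that circle as the real integral over $[x_1,x_2]$ of~\eqref{defvarphiy} by means of the conformal map $X^*$ of Proposition~\ref{conformal}. Since $\alpha_Y=\overline{a_Y}/a_Y$ has modulus $1$ on the circle and, by the preceding lemma, $\kappa_Y=0$, the function $\log\alpha_Y=-2i\arg a_Y$ has a continuous single-valued determination there; put
$$
\varphi_Y(y)=\exp\left(\frac{1}{2\pi i}\int_{C\left(0,\sqrt{\frac{\phi_2}{\rho_2}}\right)}\frac{\log\alpha_Y(t)}{t-y}\,dt\right),\qquad y\in\C\setminus C\left(0,\sqrt{\frac{\phi_2}{\rho_2}}\right).
$$
By the Sokhotski--Plemelj formulas the Cauchy integral has jump $\log\alpha_Y$ across the circle, so $\varphi_Y^i=\alpha_Y\varphi_Y^e$ there; moreover $\varphi_Y$ is non-vanishing, continuous up to the circle from both sides, and $\varphi_Y(y)\to1$ as $y\to\infty$. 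Hence $F_Y:=\frac{-\phi_2}{1-\phi_1-\phi_2}P(0,0)\,\varphi_Y$ solves~\eqref{Hilberty} with exactly the limit at infinity recorded after~\eqref{FYP0y}. Uniqueness follows in the usual way: two solutions have the same interior and exterior boundary values on the circle, so their quotient extends to an entire function, which is bounded (the factor $\varphi_Y$ never vanishes and both solutions tend to the same non-zero constant at infinity) and therefore constant, equal to $1$ by the value at infinity.

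It remains to put $\varphi_Y$ into the stated form. By Proposition~\ref{conformal}, $X^*$ maps the circle two-to-one onto the slit $[x_1,x_2]$, the two preimages of $x\in[x_1,x_2]$ being the conjugate boundary values $Y^*(x\pm i0)$ of $Y^*$ on that slit, equivalently the two roots in $y$ of $h_1(x,y)=0$; by Vieta their sum is $\frac{-\lambda_1x^2+(\lambda_1+\lambda_2+\phi_1\mu_1+\phi_2\mu_2)x-\phi_1\mu_1}{\lambda_2x}$ and their product is the \emph{constant} $\phi_2/\rho_2$, while $\Im Y^*(x\pm i0)=\pm\frac{\sqrt{-\Delta_1(x)}}{2\lambda_2x}$. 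Parametrising $t=Y^*(x\pm i0)$ on the two arcs of the circle and adding the two contributions, the branch-cut part of $\frac{d}{dx}Y^*$ drops out precisely because the product of the roots is constant, leaving the elementary identity
$$
\frac{1}{y-Y^*(x+i0)}\frac{dY^*}{dx}(x+i0)+\frac{1}{y-Y^*(x-i0)}\frac{dY^*}{dx}(x-i0)=\frac{-(\lambda_1x^2-\phi_1\mu_1)\,y}{x\,h_1(x,y)},
$$
which converts the Cauchy integral into $\varphi_Y(y)=\exp\left(\frac{y}{\pi}\int_{x_1}^{x_2}\frac{(\lambda_1x^2-\phi_1\mu_1)\,\Theta_Y(x)}{x\,h_1(x,y)}\,dx\right)$, where $\Theta_Y(x)$ is a continuous determination on $[x_1,x_2]$ of $\arg\!\big(a_Y\circ Y^*\big)$. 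Substituting $\Re Y^*$ and $\Im Y^*$ above into the factorisation $q_Y(x,y)h_3(x,y)=\mu_2x(x-1)\big(-\lambda_2\mu_1x(1-\phi_1-\phi_2)y+R_Y(x)\big)$ obtained in the proof of~\eqref{defalphaY}, the argument of the only non-real factor $-\lambda_2\mu_1x(1-\phi_1-\phi_2)Y^*(x\pm i0)+R_Y(x)$ has imaginary part a multiple of $\mu_1(\phi_1+\phi_2-1)\sqrt{-\Delta_1(x)}$ and real part equal to $f_Y(x)$, the quantity shown to be strictly negative on $[0,1]$ in the proof of $\kappa_Y=0$; this constant sign is exactly what makes $\Theta_Y$ genuinely continuous on $[x_1,x_2]$, and reading off this argument --- together with the constant-sign real rational factor $\mu_2x(x-1)/\mathrm{Res}_y(h_1,h_2)(x)$ --- produces the $\atan$ expression~\eqref{defThetaY}.

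The conceptual content is immediate from the two preceding lemmas and standard index-$0$ Riemann--Hilbert theory; I expect the main obstacle to be the bookkeeping in this last passage: fixing the orientation of the two arcs of the circle, the boundary-value sign of $\varrho_1$ on $[x_1,x_2]$, and the additive constant of the continuous branch of $\arg a_Y$ (equivalently, checking that $\mathrm{Res}_y(h_1,h_2)(x)$ keeps a constant sign on $[x_1,x_2]$), so that the outcome matches~\eqref{defvarphiy}--\eqref{defThetaY} exactly and $\varphi_Y(\infty)=1$. Once the set-up is chosen, the remaining algebra is routine.
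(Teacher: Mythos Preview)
Your proposal is correct and follows essentially the same route as the paper: invoke the index-$0$ theory to get $F_Y$ as a constant times the exponential of the Cauchy integral of $\log\alpha_Y$, fix the constant from the value at infinity, and then convert the contour integral over $C\!\left(0,\sqrt{\phi_2/\rho_2}\right)$ into the real integral over $[x_1,x_2]$ via the parametrisation $t=Y^*(x\pm i0)$. The only stylistic difference is that the paper computes $\frac{dY^*}{dx}$ by implicit differentiation of $h_1(x,Y^*(x))=0$ and then takes the imaginary part for real $y$, whereas you add the two conjugate arcs and use Vieta (the observation that $Y_+(x)Y_-(x)=\phi_2/\rho_2$ is constant) to collapse the sum into $-\frac{d}{dx}\log\!\left(\frac{h_1(x,y)}{-\lambda_2 x}\right)$; both lead to the same kernel $\frac{(\lambda_1x^2-\phi_1\mu_1)y}{x\,h_1(x,y)}$, and the remaining issues you flag (orientation of the arcs, branch of $\varrho_1$, the sign of the real factor $\mu_2x(x-1)/\mathrm{Res}_y(h_1,h_2)(x)$) are indeed pure bookkeeping.
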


\begin{proof}
Since under Condition~\eqref{stabcondex}, the index of the Riemann-Hilbert~\eqref{Hilberty} is null, the solution reads
$$
F_Y(y) = \phi_Y(y) \exp\left(\frac{1}{2i\pi}\int_{C\left(0,\sqrt{\frac{\phi_2}{\rho_2}}\right)} \frac{\log\alpha_Y(z)}{z-y} dz  \right)
$$
where the function $\alpha_Y(y)$ is defined by Equation~\eqref{defalphaY} and $\phi_Y(y)$ is a polynomial. Since we know that $F_Y(y)\to -\phi_2P(0,0)/(1-\phi_1-\phi_2)$ as $|y|\to \infty$, then
$$
\phi_Y(y) =  \frac{-\phi_2}{1-\phi_1-\phi_2}P(0,0).
$$

Let for $y \in C\left(0,\sqrt{\frac{\phi_2}{\rho_2}}\right)$ and $y = Y^*(x+i0)$ for $x\in [x_1,x_2]$
$$
\Theta_Y(x) = \arg\left(\lambda_2\mu_1 x (1-\phi_1-\phi_2)Y^*(x+0i) - R_Y(x)\right)
$$
By using the expression of $Y^*(x)$, Equation~\eqref{defThetaY} follows. It is clear that  
$$
\log\alpha_Y(Y^*(x+0i)) = -2i\Theta_Y(x).
$$

Since $Y^*(x+0i) = \overline{Y^*(x-0i)}$, we have
\begin{align*}
\frac{1}{2i\pi}\int_{C\left(0,\sqrt{\frac{\phi_2}{\rho_2}}\right)} \frac{\log\alpha_Y(z)}{z-y} dz &=\frac{1}{\pi} \int_{x_1}^{x_2}\Im\left(\frac{\log\alpha_Y(Y^*(x+0i))}{Y^*(x+0i)-y}\frac{dY^*}{dx}(x+0i) \right)dx \\
&=\frac{1}{\pi} \int_{x_1}^{x_2}\Im\left(\frac{-2i}{Y^*(x+0i)-y}\frac{dY^*}{dx}(x+0i) \right)\Theta_Y(x)dx
\end{align*}
It is easily checked from the equation $h_1(x,Y^*(x) )=0$ that
$$
\frac{dY^*}{dx} = \frac{-2\lambda_1xY^*(x)-\lambda_2Y^*(x)^2+(\lambda_1+\lambda_2+\phi_1\mu_1+\phi_2\mu_2)Y^*(x) -\phi_2\mu_2}{\lambda_1x^2+2\lambda_2 xY^*(x)-(\lambda_1+\lambda_2+\phi_1\mu_1+\phi_2\mu_2)x+\phi_1\mu_1}
$$
For $x \in [x_1,x_2]$, we have
$$
\lambda_1x^2+2\lambda_2 xY^*(x+0i)-(\lambda_1+\lambda_2+\phi_1\mu_1+\phi_2\mu_2)x+\phi_1\mu_1=-i\sqrt{-D_1(x)}
$$
By using once again $h_1(x,Y^*(x+0i))=0$, we obtain for $x \in [x_1,x_2]$
$$
\frac{dY^*}{dx}(x+0i) =\frac{(\phi_1\mu_1-\lambda_1x^2)Y^*(x+0i)}{-ix\sqrt{-D_1(x)}}
$$
and then for real $y$
$$
\Im\left(\frac{-2i}{Y^*(x+0i)-y}\frac{dY^*}{dx}(x+0i) \right) = \frac{(\lambda_1x^2-\phi_1\mu_1)y}{xh_1(x,y)}.
$$
It follows that for real $y$
$$
\frac{1}{2i\pi}\int_{C\left(0,\sqrt{\frac{\phi_2}{\rho_2}}\right)} \frac{\log\alpha_Y(z)}{z-y} dz = \frac{y}{\pi} \int_{x_1}^{x_2}\frac{(\lambda_1x^2-\phi_1\mu_1)\Theta_Y(x)}{xh_1(x,y)}dx
$$
It is easily checked that the function on the right hand side of the above equation can analytically be continued in the disk $D\left(0,\sqrt{\frac{\phi_2}{\rho_2}}\right)$. Hence for $y \in D\left(0,\sqrt{\frac{\phi_2}{\rho_2}}\right)$, the first part of  Equation~\eqref{defFy} follows. When $y$ is not in the closed disk $\overline{D\left(0,\sqrt{\frac{\phi_2}{\rho_2}}\right)}$, similar arguments can be used to derive the second part of Equation~\eqref{defFy}.\end{proof}

In view of the above lemma, we can state the main result of this section.

\begin{theorem}
The function $P(0,y)$ can be defined as a meromorphic function in $\C\setminus[y_3,y_4]$ by setting 
\begin{equation}
\label{defP0y}
 {P(0,y)} =\left\{  \begin{array}{l}
\frac{-\phi_2P(0,0)}{1-\phi_1-\phi_2}  \varphi_Y(y)+ \frac{(1-\phi_1)P(0,0)}{1-\phi_1-\phi_2}, \; y \in D\left(0,\sqrt{\frac{\phi_2}{\rho_2} } \right), \\ \\ 
\frac{-\phi_2P(0,0)}{1-\phi_1-\phi_2}\alpha_Y(y) \varphi_Y(y) + \frac{(1-\phi_1)P(0,0)}{1-\phi_1-\phi_2}, \; y \notin D\left(0,\sqrt{\frac{\phi_2}{\rho_2} } \right),
\end{array}\right.
\end{equation}
where $\varphi_Y(y)$ is defined by Equation~\eqref{defvarphiy}.
\end{theorem}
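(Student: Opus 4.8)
The plan is to read off the first line of \eqref{defP0y} directly from Lemma~\ref{lemHilbert}, and then to build the meromorphic continuation of $P(0,y)$ past the disk $D(0,\sqrt{\phi_2/\rho_2})$ by gluing across the circle $C(0,\sqrt{\phi_2/\rho_2})$ via the Sokhotski--Plemelj formulas. By Proposition~\ref{analyticext} the generating function $P(0,y)$ is analytic in $D(0,\sqrt{\phi_2/\rho_2})$ and continuous on its closure; hence the function $F_Y$ associated with it through \eqref{FYP0y} is sectionally analytic with respect to $C(0,\sqrt{\phi_2/\rho_2})$, tends to $-\phi_2 P(0,0)/(1-\phi_1-\phi_2)$ at infinity, and --- as was checked when deriving \eqref{Hilberty} --- satisfies the boundary relation \eqref{Hilberty}. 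It is therefore a solution of that Riemann--Hilbert problem, and by the uniqueness part of Lemma~\ref{lemHilbert} it must equal the explicit solution \eqref{defFy}, i.e.\ $F_Y(y)=\frac{-\phi_2}{1-\phi_1-\phi_2}P(0,0)\,\varphi_Y(y)$ with $\varphi_Y$ given by \eqref{defvarphiy}. Feeding this back into \eqref{FYP0y} for $y$ inside the disk produces the first line of \eqref{defP0y}.

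Next I would record the global behaviour of $\varphi_Y$. The computation carried out in the proof of Lemma~\ref{lemHilbert} shows that, on either side of $C(0,\sqrt{\phi_2/\rho_2})$, the exponent in \eqref{defvarphiy} equals the Cauchy-type integral $\frac{1}{2i\pi}\int_{C(0,\sqrt{\phi_2/\rho_2})}\frac{\log\alpha_Y(z)}{z-y}\,dz$; thus $\varphi_Y$ is the exponential of that Cauchy integral. The only singularities in $y$ of the integrand of \eqref{defvarphiy} are the zeros of $h_1(x,\cdot)$ as $x$ runs over $[x_1,x_2]$, and these fill exactly the circle $C(0,\sqrt{\phi_2/\rho_2})$ (by the slit--circle correspondence underlying Proposition~\ref{conformal}, already used after \eqref{RHPy}, together with the conjugate branch). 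Hence $\varphi_Y$ is analytic on all of $\C\setminus C(0,\sqrt{\phi_2/\rho_2})$ --- in particular across $[y_1,y_2]$, which is interior to the disk, and across $[y_3,y_4]$ --- and it extends continuously to $C(0,\sqrt{\phi_2/\rho_2})$ from both sides. By the Sokhotski--Plemelj formulas its interior and exterior boundary values are linked on that circle by $\varphi_Y^i(y)=\alpha_Y(y)\varphi_Y^e(y)$, where by \eqref{defalphayini} the function $\alpha_Y$ has modulus one on $C(0,\sqrt{\phi_2/\rho_2})$ and so is free of zeros and poles there.

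Now let $\Psi(y)$ denote the right-hand side of \eqref{defP0y}. Inside $D(0,\sqrt{\phi_2/\rho_2})$ one has $\Psi=P(0,y)$ by the first step, and there $\Psi$ is analytic, including across $[y_1,y_2]$, since it is affine in $\varphi_Y$. Outside the closed disk $\Psi$ is affine in $\alpha_Y(y)\varphi_Y(y)$, hence meromorphic in $\C\setminus(\overline{D(0,\sqrt{\phi_2/\rho_2})}\cup[y_3,y_4])$, because the meromorphic extension \eqref{defalphaY} of $\alpha_Y$ is holomorphic off $[y_1,y_2]\cup[y_3,y_4]$ and the cut $[y_1,y_2]$ lies inside the disk. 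On $C(0,\sqrt{\phi_2/\rho_2})$ the interior limit of $\Psi$ is $\frac{-\phi_2P(0,0)}{1-\phi_1-\phi_2}\varphi_Y^i(y)+\frac{(1-\phi_1)P(0,0)}{1-\phi_1-\phi_2}$ and its exterior limit is $\frac{-\phi_2P(0,0)}{1-\phi_1-\phi_2}\alpha_Y(y)\varphi_Y^e(y)+\frac{(1-\phi_1)P(0,0)}{1-\phi_1-\phi_2}$; these coincide by the Plemelj relation just obtained and the continuity of $\alpha_Y$ on the circle. Thus $\Psi$ is continuous on a neighbourhood of $C(0,\sqrt{\phi_2/\rho_2})$ and holomorphic off that arc, hence holomorphic across it by Morera's theorem, so $\Psi$ is meromorphic in $\C\setminus[y_3,y_4]$. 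Since it agrees with the generating function $P(0,y)$ on $D(0,\sqrt{\phi_2/\rho_2})$, it is its meromorphic continuation, which is precisely \eqref{defP0y}.

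The main obstacle is this gluing across $C(0,\sqrt{\phi_2/\rho_2})$: one must be certain that the integral in \eqref{defvarphiy} really is the Cauchy integral of $\log\alpha_Y$ over the circle, so that the jump delivers exactly the factor $\alpha_Y$ (and not $\alpha_Y^{-1}$ or an extra constant coming from the branch of the logarithm), and that neither the reflection in \eqref{FYP0y} nor the passage to the explicit form \eqref{defalphaY} places a pole onto $C(0,\sqrt{\phi_2/\rho_2})$. The vanishing of the index $\kappa_Y=0$ and the unimodularity of $\alpha_Y$ on the circle are exactly what make the logarithm single-valued and the jump clean, and these are the facts I would lean on.
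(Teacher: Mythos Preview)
Your argument is correct and follows essentially the same route as the paper: identify $P(0,y)$ inside the disk via the uniqueness in Lemma~\ref{lemHilbert}, then continue across $C\!\left(0,\sqrt{\phi_2/\rho_2}\right)$ using the Plemelj jump $\varphi_Y^i=\alpha_Y\varphi_Y^e$ of the exponentiated Cauchy integral. The paper states this continuation more tersely (invoking the analyticity of $\log\alpha_Y$ near the circle and a reference), while you spell out the Sokhotski--Plemelj/Morera gluing explicitly; the content is the same.
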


\begin{proof}Since the solution to the Riemann-Hilbert problem~\eqref{RHPyb} is unique, the function $P(0,y)$ coincides with the function $F_Y(y)+(1-\phi_1)P(0,0)/(1-\phi_1-\phi_2)$ in $D\left(0,\sqrt{\frac{\phi_2}{\rho_2} } \right)$. We can extend this function as follows (see \cite{Nova} for details). Noting that the function $\log\alpha_Y(y)$ is analytic in a neighborhood of the circle $C\left(0,\sqrt{\frac{\phi_2}{\rho_2} } \right)$, the function 
$$
y \mapsto \exp\left( \frac{1}{2i\pi}\int_{C\left(0,\sqrt{\frac{\phi_2}{\rho_2}}\right)} \frac{\log\alpha_Y(z)}{z-y} dz \right)
$$
defined for $y \in D\left(0,\sqrt{\frac{\phi_2}{\rho_2} } \right)$ can be continued as a meromorphic function in $\C\setminus[x_3,x_4]$ by considering the function defined for $y \notin \overline{D\left(0,\sqrt{\frac{\phi_2}{\rho_2} } \right)}$ by
\begin{multline*}
\alpha_Y(y) \exp\left(\frac{1}{2i\pi}\int_{C\left(0,\sqrt{\frac{\phi_2}{\rho_2}}\right)} \frac{\log\alpha_Y(z)}{z-y} dz \right) = \\  \alpha_Y(y) \exp\left( \frac{y}{\pi} \int_{x_1}^{x_2}\frac{(\lambda_1x^2-\phi_1\mu_1)\Theta_Y(x)}{xh_1(x,y)}dx  \right),
\end{multline*}
where the last equality is obtained by using the same arguments as above (consider first real $y$ and then extend the function by analytic continuation).
\end{proof}

The poles of the function $P(0,y)$ are the poles of the function $\alpha_Y(y)$ defined by Equation~\eqref{defalphaY}, which can be rewritten as
$$
\alpha_Y(y) = \frac{h_3\left(X^*(y),\frac{\phi_2}{\rho_2 y}\right) h_2(X^*(y),y)}{h_2\left(X^*(y),\frac{\phi_2}{\rho_2 y}\right) h_3(X^*(y),y)}.
$$
The poles of the function $\alpha_Y(y)$ are clearly  related to the solutions to the equations $h_3(X^*(y),y)=0$ and $h_2\left(X^*(y),\frac{\phi_2}{\rho_2 y}\right) =0$. This observation leads us to introduce the resultants with respect to $x$ of the polynomials $h_1(x,y)$ and $h_2(x,y)$ on the one hand and $h_1(x,y)$ and $h_3(x,y)$ on the other hand.

The resultant $\mathrm{Res}_x(h_1,h_2)(y) = \mu_1y(y-1)P_X(y)$, where the polynomial $P_X(y)$ is given by 
\begin{multline}
\label{defPX}
{P}_{X}(y) = \lambda_2(1-\phi_1)(\phi_2\mu_2-(1-\phi_1)\mu_1)y^2 \\ -\phi_2\mu_2((1-\phi_1)(\lambda_1+\lambda_2)-\mu_1(1-\phi_1)+\mu_2\phi_2)y  +\phi_2^2\mu_2^2.
\end{multline}
The roots of this polynomial are given by
\begin{equation}
\label{ypm}
y_\pm = \phi_2   \frac{((1-\phi_1)(\lambda_1+\lambda_2)-\mu_1(1-\phi_1)+\mu_2\phi_2) \pm\sqrt{\Delta_X^{(1)}}}{2\rho_2 (1-\phi_1)(\phi_2\mu_2-(1-\phi_1)\mu_1)},
\end{equation}
where
\begin{multline*}
\Delta_X^{(1)} = ((1-\phi_1)(\lambda_1+\lambda_2)-\mu_1(1-\phi_1)+\mu_2\phi_2)^2-4 \lambda_2(1-\phi_1)(\phi_2\mu_2-(1-\phi_1)\mu_1).
\end{multline*}

The resultant $\mathrm{Res}_x(h_1,h_3)(y) = -\phi_1\mu_1y(y-1)Q_X(y)$, where
\begin{multline}
Q_X(y) = \lambda_2(\phi_1\mu_1-(1-\phi_2)\mu_2)y^2 \\ +\left((1-\phi_2)(\lambda_1+\lambda_2)-(\phi_1\mu_1-(1-\phi_2)\mu_2)  \right)\mu_2y-(1-\phi_2)\mu_2^2.
\end{multline}
 The polynomial $Q_X(y)$ has two real roots given by
\begin{equation}
\label{xipm}
\xi_\pm = \frac{-\left((1-\phi_2)(\lambda_1+\lambda_2)-(\phi_1\mu_1-(1-\phi_2)\mu_2)  \right) \pm \sqrt{\Delta^{(2)}_Y}}{2\rho_2 (\phi_1\mu_1-(1-\phi_2)\mu_2)},
\end{equation}
where
\begin{multline*}
\Delta^{(2)}_Y =  \\ \left((1-\phi_2)(\lambda_1+\lambda_2)-(\phi_1\mu_1-(1-\phi_2)\mu_2)^2+4(1-\phi_2) \lambda_2(\phi_1\mu_1-(1-\phi_2)\mu_2)    \right).
\end{multline*}
If $\phi_1\mu_1-(1-\phi_2)\mu_2\leq0$, $\xi_\pm$ are both positive. If $\phi_1\mu_1-(1-\phi_2)\mu_2>0$, $\xi_-$ is negative and $\xi_+$ is positive. The solution $\xi_+$ is always the positive root with the smallest module.

By using Lemmas~\ref{lemmah1h2} and \ref{lemmah1h3} proved in Appendix, we can show the following result.

\begin{proposition}
\label{P0yresult}
The function $P(0,y)$ is analytic in the disk $D(0,\rho_Y)$ with center 0 and radius $\rho_Y$ given by
$$
\rho_Y = \left\{
\begin{array}{ll}
\xi_+ & \mbox{if} \; \phi_1>\rho_1 \; \mbox{and} \; Q_Y\left(\sqrt{\frac{\phi_1}{\rho_1}}  \right)<0, \\
y_3 & \mbox{otherwise}.
\end{array}
\right.
$$
The function $P(0,y)$ satisfies Equation~\eqref{RHPy}.
\end{proposition}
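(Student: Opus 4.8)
The plan is to locate the singularities of $P(0,y)$ directly from the closed-form expression \eqref{defP0y}, distinguishing the two regimes $|y|<\sqrt{\phi_2/\rho_2}$ and $|y|>\sqrt{\phi_2/\rho_2}$, and then to pit the first pole of the continuation against the branch point $y_3$. First I would establish analyticity in the closed disk $\overline{D(0,\sqrt{\phi_2/\rho_2})}$: there $P(0,y)$ is a constant plus a multiple of $\varphi_Y(y)=\exp\!\left(\frac{y}{\pi}\int_{x_1}^{x_2}\frac{(\lambda_1x^2-\phi_1\mu_1)\Theta_Y(x)}{x\,h_1(x,y)}\,dx\right)$, so it is enough to see that the integrand remains finite. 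Viewed as a quadratic in $y$, the kernel $h_1(x,\cdot)$ has its two roots with product $\phi_2\mu_2/\lambda_2=\phi_2/\rho_2$, independently of $x$; for $x\in[x_1,x_2]$ one has $\Delta_1(x)\le 0$, so these roots are complex conjugate and hence of modulus $\sqrt{\phi_2/\rho_2}$. Therefore $h_1(x,y)\neq 0$ on $[x_1,x_2]$ as soon as $|y|<\sqrt{\phi_2/\rho_2}$, the exponent is analytic, $\varphi_Y$ is analytic and nowhere zero, and continuity up to $C(0,\sqrt{\phi_2/\rho_2})$ is supplied by Proposition~\ref{analyticext}.

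Next I would treat the exterior region, where by the second line of \eqref{defP0y} one has $P(0,y)=\mathrm{const}+\frac{-\phi_2P(0,0)}{1-\phi_1-\phi_2}\,\alpha_Y(y)\varphi_Y(y)$, with $\varphi_Y$ now read as the analytic continuation of the same exponential Cauchy integral. That continuation is analytic and non-vanishing for $|y|>\sqrt{\phi_2/\rho_2}$ — in particular it carries no cut there, and across the circle it dovetails with the interior formula because of the relation $\varphi_Y^i=\alpha_Y\varphi_Y^e$. Consequently the only singularities of $P(0,y)$ outside the disk are the branch point $y_3$, inherited through $\alpha_Y$ from the cut $[y_3,y_4]$ of $X^*$ (the other cut $[y_1,y_2]$ lying inside the disk), and the poles of $\alpha_Y(y)$. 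Writing $\alpha_Y(y)=\dfrac{h_3\!\left(X^*(y),\frac{\phi_2}{\rho_2 y}\right)h_2(X^*(y),y)}{h_2\!\left(X^*(y),\frac{\phi_2}{\rho_2 y}\right)h_3(X^*(y),y)}$ and using that $h_1\!\left(X^*(y),\frac{\phi_2}{\rho_2 y}\right)=0$ identically (again because the two $y$-roots of $h_1(X^*(y),\cdot)$ multiply to $\phi_2/\rho_2$), a pole of $\alpha_Y$ forces either $(X^*(y),y)$ to be a common zero of $h_1$ and $h_3$, i.e.\ $y\in\{0,1,\xi_-,\xi_+\}$ via $\mathrm{Res}_x(h_1,h_3)=-\phi_1\mu_1y(y-1)Q_X(y)$, or $\left(X^*(y),\frac{\phi_2}{\rho_2 y}\right)$ to be a common zero of $h_1$ and $h_2$, i.e.\ $\frac{\phi_2}{\rho_2 y}\in\{0,1,y_-,y_+\}$ via $\mathrm{Res}_x(h_1,h_2)=\mu_1y(y-1)P_X(y)$.

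The hard part is to decide, among this finite list, which points are genuine poles in the annulus $\sqrt{\phi_2/\rho_2}<|y|<y_3$ and which are removable — at $y=1$ and at $y=\phi_2/(\rho_2 y_\pm)$ the numerator and denominator of $\alpha_Y$ vanish together, forced by the identity \eqref{h2h3h4} together with the boundary relations \eqref{Px1}--\eqref{P01}, so one must compute the local expansions to confirm analyticity there. This bookkeeping is precisely what Lemmas~\ref{lemmah1h2} and~\ref{lemmah1h3} of the Appendix deliver; I would invoke them to conclude that the only candidate able to lie strictly inside the cut is $\xi_+$, that it is an effective pole exactly when $\phi_1>\rho_1$ and $Q_Y(\sqrt{\phi_1/\rho_1})<0$, and that in every other case all candidates are removable or of modulus at least $y_3$. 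Comparing $\xi_+$ with $y_3$ then yields the stated value of $\rho_Y$. Finally, that $P(0,y)$ satisfies \eqref{RHPy} needs nothing new: \eqref{defP0y} was obtained as the unique solution of \eqref{RHPyb}, which is exactly \eqref{RHPy} rewritten via \eqref{h2h3h4}, and on $C(0,\sqrt{\phi_2/\rho_2})$ the continued function coincides with that solution — the passage from the doubled problem \eqref{Hilberty} back to \eqref{RHPyb} being legitimate because the Taylor series of $\varphi_Y$ at $0$ has radius of convergence exceeding $\sqrt{\phi_2/\rho_2}$.
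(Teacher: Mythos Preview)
Your approach is correct and essentially the same as the paper's: invoke Lemmas~\ref{lemmah1h2} and~\ref{lemmah1h3} to locate the first singularity of the explicit formula~\eqref{defP0y}, and use that the Taylor radius exceeds $\sqrt{\phi_2/\rho_2}$ to close the loop from~\eqref{Hilberty} back to~\eqref{RHPy}. The paper spells out the latter step more explicitly --- the real coefficients $p(0,n)$ force $F^e_Y(y)=\overline{F^i_Y(y)}$ on the circle, so~\eqref{Hilberty} collapses to~\eqref{RHPyb} --- and your aside invoking~\eqref{Px1}--\eqref{P01} for removability is misplaced (that cancellation is purely algebraic, handled by the appendix lemmas), but these are minor points.
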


\begin{proof}
The radius of convergence $\rho_Y$ can easily be deduced from  Lemmas~\ref{lemmah1h2} and \ref{lemmah1h3} proved in Appendix
Since $\rho_Y>\sqrt{\frac{\phi_2}{\rho_2}}$, this implies that the series expansion $\sum_{n=0}^\infty p(0,n)y^n$ valid in $D(0,1)$ by definition is also valid in $D\left( 0,\rho_Y\right)$. This implies that if $y$ is sufficiently close to the circle $C\left( 0, \sqrt{\frac{\phi_2}{\rho_2}}\right)$ on the inside, we have
$$
P(0,y) = \sum_{n=0}^\infty p(0,n)y^n \equiv F^i_Y(y)+\frac{1-\phi_1}{1-\phi_1-\phi_2}P(0,0).
$$
The point $\phi_2/(\rho_2\overline{y})$ is close to $y$ but on the outside of the disk $D\left( 0, \sqrt{\frac{\phi_2}{\rho_2}}\right)$ and we have
$$
\overline{P\left(0,\frac{\phi_2}{\rho_2\overline{y}}\right)} = \sum_{n=0}^\infty p(0,n)\overline{y}^n\equiv F^e_Y(y)+\frac{1-\phi_1}{1-\phi_1-\phi_2}P(0,0).
$$
By using the fact that the function $F_Y(y)$ satisfies Equation~\eqref{Hilberty}, we immediately deduce that the function $P(0,y)$ satisfies Equation~\eqref{RHPy}.
\end{proof}

To conclude this section, let us  give the value of  $P(0,0)$, which is different from the classical $(1-\rho)$-formula valid for work conserving systems.

\begin{corollary}
The quantity $P(0,0)$ is given by
\begin{equation}
P(0,0) = \left\{
\begin{array}{ll}
\frac{1}{\varphi_Y(1)}\left(1-\rho_1-\frac{1-\phi_1}{\phi_2}\rho_2\right) & \mbox{if} \; \phi_2> \rho_2,\\
\frac{\phi_1}{(1-\phi_2)\varphi_Y(1)}\left(1-\frac{1-\phi_2}{\phi_1}\rho_1-\rho_2\right) & \mbox{if} \; \phi_2\leq \rho_2.
\end{array}
\right.
\end{equation}
\end{corollary}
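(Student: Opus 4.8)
The plan is to extract $P(0,0)$ from a single scalar equation obtained by evaluating the closed form~\eqref{defP0y} of $P(0,y)$ at a suitably chosen real point and matching it against the normalisation relations~\eqref{P10}--\eqref{P01} and~\eqref{P1y} already established in Section~\ref{model}. The dichotomy in the statement is precisely the dichotomy of whether $y=1$ lies inside the disk $D\bigl(0,\sqrt{\phi_2/\rho_2}\bigr)$, i.e. whether $\sqrt{\phi_2/\rho_2}>1$.

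\emph{Case $\phi_2>\rho_2$.} Here $1$ is an interior point of $D\bigl(0,\sqrt{\phi_2/\rho_2}\bigr)$, so the first branch of~\eqref{defP0y} evaluated at $y=1$ expresses $P(0,1)$ as an affine function of $P(0,0)$ whose coefficients involve only $\varphi_Y(1)$. Equating this with the value of $P(0,1)$ given by~\eqref{P01} and cancelling the common summand $(1-\phi_1)P(0,0)/(1-\phi_1-\phi_2)$ leaves $-\phi_2\varphi_Y(1)P(0,0)=-\phi_2+(1-\phi_1)\rho_2+\phi_2\rho_1$, which rearranges to the first formula. Note $\varphi_Y(1)\neq0$ (it is an exponential) and it is well defined because $h_1(x,1)=-(\lambda_1x-\phi_1\mu_1)(x-1)$ has no zero on $[x_1,x_2]\subset(0,1)$, which follows from $\Delta_1(\phi_1/\rho_1)>0$.

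\emph{Case $\phi_2\le\rho_2$.} Now $1$ lies outside $D\bigl(0,\sqrt{\phi_2/\rho_2}\bigr)$, so evaluating~\eqref{defP0y} at $1$ would drag in the factor $\alpha_Y(1)$, which is inconvenient. I avoid this by using the point $y=\phi_2/\rho_2\le1$ instead. Since $P(1,\cdot)=\E(\cdot^{N_2})$ is analytic on the closed unit disk, the numerator of the right-hand side of~\eqref{P1y} must vanish at $y=\phi_2/\rho_2$, where its denominator vanishes; this yields $\phi_2P(1,0)-(1-\phi_2)P(0,\phi_2/\rho_2)+(1-\phi_2)P(0,0)=0$. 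The point $\phi_2/\rho_2$ lies in $D\bigl(0,\sqrt{\phi_2/\rho_2}\bigr)$ (strictly when $\phi_2<\rho_2$; the boundary case $\phi_2=\rho_2$ follows from continuity of $P(0,\cdot)$ on the closed disk, Proposition~\ref{analyticext}), so $P(0,\phi_2/\rho_2)$ is given by the first branch of~\eqref{defP0y}. The crux is then the elementary identity $h_1\bigl(x,\phi_2/\rho_2\bigr)=(\phi_2/\rho_2)\,h_1(x,1)$ — both sides equal $-(\phi_2/\rho_2)(\lambda_1x-\phi_1\mu_1)(x-1)$ — which makes $y/h_1(x,y)$ take the same value at $y=\phi_2/\rho_2$ and at $y=1$, whence $\varphi_Y(\phi_2/\rho_2)=\varphi_Y(1)$ from~\eqref{defvarphiy}. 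Substituting this together with $P(1,0)$ from~\eqref{P10} into the vanishing-numerator relation, the coefficient of $P(0,0)$ collapses to $(1-\phi_2)\phi_2\varphi_Y(1)$ (using $\phi_2-(1-\phi_1)+(1-\phi_1-\phi_2)=0$), leaving $(1-\phi_2)\varphi_Y(1)P(0,0)=\phi_1-(1-\phi_2)\rho_1-\phi_1\rho_2$, i.e. the second formula. One checks that the two expressions coincide at $\phi_2=\rho_2$, as they must.

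The only non-routine ingredient is recognising, in the second case, that one should evaluate at $y=\phi_2/\rho_2$ rather than at $y=1$ and that $\varphi_Y$ is invariant under this replacement thanks to $h_1(x,\phi_2/\rho_2)=(\phi_2/\rho_2)h_1(x,1)$; everything else is linear algebra in the three boundary values $P(0,0)$, $P(0,1)$, $P(1,0)$ that Section~\ref{model} has already pinned down. Positivity of $P(0,0)$, which also serves as a sanity check, follows from the stability condition~\eqref{stabcondex}.
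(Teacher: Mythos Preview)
Your argument is correct. The first case ($\phi_2>\rho_2$) is exactly the paper's proof: evaluate the interior branch of~\eqref{defP0y} at $y=1$ and match against~\eqref{P01}.

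In the second case ($\phi_2\le\rho_2$) you take a genuinely different route. The paper stays at $y=1$, now using the \emph{exterior} branch of~\eqref{defP0y}; this brings in $\alpha_Y(1)$, which the paper computes explicitly from $X^*(1)=1$ (valid since $\rho_1<\phi_1$ under stability) and the closed form~\eqref{defRY} of $R_Y$, obtaining
\[
\alpha_Y(1)=\frac{1-\phi_2}{\phi_1}\,\frac{1-\rho_1-\frac{1-\phi_1}{\phi_2}\rho_2}{1-\frac{1-\phi_2}{\phi_1}\rho_1-\rho_2},
\]
and then matches with~\eqref{P01}. You instead move the evaluation point to $y=\phi_2/\rho_2$, which sits inside the disk, and exploit the removable-singularity relation for $P(1,y)$ together with the neat identity $h_1(x,\phi_2/\rho_2)=(\phi_2/\rho_2)\,h_1(x,1)$, whence $\varphi_Y(\phi_2/\rho_2)=\varphi_Y(1)$. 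Your path avoids any explicit computation of $\alpha_Y$; the price is the extra (easy) observation about $h_1$ and the need to invoke continuity of $P(0,\cdot)$ on the closed disk to cover the borderline $\phi_2=\rho_2$. Both approaches are short; yours is arguably the more self-contained because it never leaves the interior branch of~\eqref{defP0y}. A minor quibble: what you call ``the coefficient of $P(0,0)$'' is really $(1-\phi_2)\phi_2\varphi_Y(1)/(1-\phi_1-\phi_2)$, with the same denominator on the constant term, so the $1-\phi_1-\phi_2$ cancels---your final formula is unaffected.
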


\begin{proof}
In the case $\phi_2> \rho_2$, we have by using Equations~\eqref{P01} and \eqref{defP0y}
\begin{equation}
P(0,0) = \frac{1}{\varphi_Y(1)}\left(1-\rho_1-\frac{1-\phi_1}{\phi_2}\rho_2\right).
\end{equation}
When $\phi_2\leq \rho_2$ and then $\rho_1<\phi_1$ and $X^*(1)=1$, we note that 
$$
R_Y(1)= \mu_1\mu_2\left(\phi_2(1-\phi_2)\rho_1+(1-\phi_1)(1-\phi_2)\rho_2-\phi_1\phi_2  \right) 
$$
and then
$$
\alpha_Y(1) = \frac{1-\phi_2}{\phi_1}\frac{1-\rho_1-\frac{1-\phi_1}{\phi_2}\rho_2}{1-\frac{1-\phi_2}{\phi_1}\rho_1-\rho_2},
$$
and we deduce that
$$
P(0,0)= \frac{\phi_1}{(1-\phi_2)\varphi_Y(1)}\left(1-\frac{1-\phi_2}{\phi_1}\rho_1-\rho_2\right).
$$
\end{proof}

The computation of the quantity $\varphi_Y(1)$ involves elliptic integrals but can easily be performed by using Computer Algebra Systems such as Mathematica.

A result similar to Proposition~\ref{P0yresult} holds for function $P(x,0)$. By using in addition, the explicit form of $P(0,0)$ given in the above corollary, we have completely determined the generating function $P(x,y)$.

\section{Asymptotic analysis}
\label{asymptotics}

In this section, we investigate the tail asymptotics of the distribution of the number $N_2$ of jobs  in queue $\# 2$, whose probability generating  function is given by Equation~\eqref{P1y}. It clearly appears from this equation that the point $y = \phi_2/\rho_2$ is a potential pole for that function. The following lemma states in which conditions the point $\phi_2/\rho_2$ is  a  removable singularity for the function $P(1,y)$.

\begin{lemma}
\label{removepole}
If $\phi_2 \leq \rho_2$ (and then necessarily $\phi_1>\rho_1$ under Condition~\eqref{stabcondex}) or if    $\phi_2>\rho_2$ and $\phi_1\geq\rho_1$, then the point $\phi_2/\rho_2$ is a removable singularity for the function $P(1,y)$.
If $\phi_2>\rho_2$ and $\phi_1<\rho_1$
$$
\phi_2P(1,0)-(1-\phi_2)P\left(0,\frac{\phi_2}{\rho_2}  \right)+(1-\phi_2)P(0,0)>0
$$
and the point $\phi_2/\rho_2$ is a pole for the function $P(1,y)$.
\end{lemma}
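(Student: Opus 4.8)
The plan is to analyze the numerator
$$
\Phi(y) := \phi_2 P(1,0) - (1-\phi_2)P(0,y) + (1-\phi_2)P(0,0)
$$
of the expression~\eqref{P1y} for $P(1,y)$ at the point $y=\phi_2/\rho_2$, and to show that $\Phi(\phi_2/\rho_2)=0$ precisely in the two stated regimes, while $\Phi(\phi_2/\rho_2)>0$ in the remaining regime $\phi_2>\rho_2$, $\phi_1<\rho_1$. The first preliminary remark is that $y=\phi_2/\rho_2$ lies strictly inside $D(0,\rho_Y)$ in all cases (indeed $\phi_2/\rho_2 > \sqrt{\phi_2/\rho_2} > \sqrt{\phi_2/\rho_2}$ when $\phi_2<\rho_2$, and one checks $\phi_2/\rho_2 \le \sqrt{\phi_2/\rho_2}\le \rho_Y$ otherwise using $y_3\ge 1$), so that $P(0,\phi_2/\rho_2)$ is well defined by Proposition~\ref{P0yresult}, and Equation~\eqref{defP0y} can be used to evaluate it.

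First I would express $\Phi(y)$ directly in terms of $\varphi_Y$ and $P(0,0)$: substituting~\eqref{defP0y} and the value of $P(1,0)$ from~\eqref{P10} gives
$$
\Phi\!\left(\tfrac{\phi_2}{\rho_2}\right) = (1-\phi_2)\,\frac{\phi_2}{1-\phi_1-\phi_2}\,P(0,0)\,\varphi_Y\!\left(\tfrac{\phi_2}{\rho_2}\right) + \Big(\phi_2 P(1,0) + (1-\phi_2)P(0,0) - (1-\phi_2)\tfrac{1-\phi_1}{1-\phi_1-\phi_2}P(0,0)\Big).
$$
Using~\eqref{P10}–\eqref{P01}, the bracketed constant simplifies (this is a short algebraic identity, involving only $\rho_1,\rho_2,\phi_1,\phi_2$) to a multiple of $\big(\phi_1-\rho_1\big)$ times an explicit positive factor. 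The key point is then that $\varphi_Y(\phi_2/\rho_2)$, evaluated from~\eqref{defvarphiy}, can be related to $\varphi_Y(1)$: in the work-conserving-analogous computation one has $\varphi_Y$ satisfying the reflection/functional relation coming from~\eqref{Hilberty}, namely $\varphi_Y(y)\,\varphi_Y(\phi_2/(\rho_2 y)) = \alpha_Y(y)$ on the circle and by analytic continuation; evaluating at $y=1$ gives $\varphi_Y(\phi_2/\rho_2) = \alpha_Y(1)/\varphi_Y(1)$ (one must check $\varphi_Y(1)\ne 0$, which follows since $\varphi_Y$ is an exponential). Plugging in $\alpha_Y(1)$ — whose value in the regime $\phi_2\le\rho_2$ (where $X^*(1)=1$) was already computed in the proof of the Corollary, and which in the regime $\phi_2>\rho_2$ simplifies analogously with $X^*(1)=\phi_1/\rho_1$ — collapses everything into a closed form.

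I expect the outcome to be of the shape
$$
\Phi\!\left(\tfrac{\phi_2}{\rho_2}\right) = c\cdot\frac{P(0,0)}{\varphi_Y(1)}\,(\phi_1-\rho_1)
$$
for an explicit constant $c>0$ depending only on $\phi_1,\phi_2$ (with an additional factor built from the stability quantities $1-\rho_1-\frac{1-\phi_1}{\phi_2}\rho_2$ and $1-\frac{1-\phi_2}{\phi_1}\rho_1-\rho_2$, both positive under Condition~\eqref{stabcondex}). Since $P(0,0)>0$ and $\varphi_Y(1)>0$, the sign of $\Phi(\phi_2/\rho_2)$ is the sign of $\phi_1-\rho_1$: it vanishes exactly when $\phi_1=\rho_1$, which under Condition~\eqref{stabcondex} forces $\phi_2>\rho_2$; and when $\phi_1>\rho_1$ the numerator vanishes in the first regime because the correcting constant term itself vanishes (this is the case $X^*(1)=1$, $\phi_2\le\rho_2$, where the bracketed constant is proportional to $1-\frac{1-\phi_2}{\phi_1}\rho_1-\rho_2$ — wait, that is nonzero — so more carefully the vanishing in the $\phi_2\le\rho_2$ case is forced by $\alpha_Y(1)$ combining with $P(0,0)$ from the Corollary to make the $\varphi_Y$ term cancel the constant term exactly). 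Finally, in the regime $\phi_2>\rho_2$, $\phi_1<\rho_1$, all factors are strictly positive, giving $\Phi(\phi_2/\rho_2)>0$, hence a genuine (simple) pole of $P(1,y)$ at $\phi_2/\rho_2$, since $\phi_2-\rho_2 y$ has a simple zero there.

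The main obstacle will be establishing and correctly using the functional relation for $\varphi_Y$ relating $\varphi_Y(\phi_2/\rho_2)$ to $\alpha_Y(1)/\varphi_Y(1)$: this requires analytically continuing the identity $F_Y^i = \alpha_Y F_Y^e$ off the circle via~\eqref{defP0y} and being careful about which branch of the solution (interior vs.\ exterior piece) is being evaluated at $y=\phi_2/\rho_2$, as well as verifying the exchange of $1$ and $\phi_2/\rho_2$ under the involution $y\mapsto \phi_2/(\rho_2 y)$ is legitimate there. A secondary, purely computational obstacle is the algebraic simplification of the constant term using~\eqref{P10}–\eqref{P01}; this is routine but must be handled separately in the two cases $\phi_2>\rho_2$ and $\phi_2\le\rho_2$ because the value of $X^*(1)$ (and hence of $\alpha_Y(1)$) differs between them, as already seen in the proof of the Corollary.
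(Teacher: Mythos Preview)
Your plan misses the much more direct route the paper takes, and in doing so introduces a genuine error. The key observation you are not using is that the point $(X^*(\phi_2/\rho_2),\,\phi_2/\rho_2)$ is a zero of the kernel $h_1$, so the fundamental equation~\eqref{eqfund} applied at that point gives a \emph{linear} relation among $P(X^*(\phi_2/\rho_2),0)$, $P(0,\phi_2/\rho_2)$ and $P(0,0)$ with no reference to $\varphi_Y$ at all. Concretely, since $X^*(\phi_2/\rho_2)=X^*(1)$ equals $1$ when $\phi_1\ge\rho_1$ and equals $\phi_1/\rho_1$ when $\phi_1<\rho_1$, the two nontrivial regimes split as follows: if $\phi_2>\rho_2$ and $\phi_1\ge\rho_1$ then~\eqref{eqfund} at $(1,\phi_2/\rho_2)$ reads exactly $\phi_2P(1,0)-(1-\phi_2)P(0,\phi_2/\rho_2)+(1-\phi_2)P(0,0)=0$; if $\phi_2>\rho_2$ and $\phi_1<\rho_1$ one uses~\eqref{eqfund} at $(\phi_1/\rho_1,\phi_2/\rho_2)$ together with the symmetric removability of $\phi_1/\rho_1$ for $P(x,1)$ (from~\eqref{Px1}) and~\eqref{P10}--\eqref{P01} to obtain the explicit positive value~\eqref{residuP0y}. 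The case $\phi_2\le\rho_2$ is immediate since then $\phi_2/\rho_2\le 1$ lies in the closed unit disk where $P(1,y)$ is automatically analytic.

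Your proposed computation through $\varphi_Y$ and a putative functional relation $\varphi_Y(y)\varphi_Y(\phi_2/(\rho_2 y))=\alpha_Y(y)$ is both unnecessary and, as you yourself suspect midway, leads to the wrong answer: you predict $\Phi(\phi_2/\rho_2)$ has the sign of $\phi_1-\rho_1$, which would make it \emph{strictly positive} when $\phi_1>\rho_1$, in direct contradiction with the lemma (and with the kernel argument above) which says it \emph{vanishes} there. The closed form~\eqref{residuP0y} does not contain $P(0,0)$ or $\varphi_Y(1)$ at all, so your expected shape is simply incorrect. The functional relation you invoke is also not what holds: from the definition of $F_Y$ one gets $\varphi_Y(y)=\varphi_Y(\phi_2/(\rho_2 y))$ (comparing the interior and exterior descriptions of $F_Y$ at real conjugate points), not a product identity, and this still would not deliver the cancellation you need.
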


\begin{proof}
If $\phi_2\leq \rho_2$, the point $\phi_2/\rho_2$ is a removable singularity for the function $P(1,y)$ because this function shall be  analytic in the closed unit disk.

If $\phi_2>\rho_2$ and $\phi_1\geq \rho_1$, we have $X^*\left(\frac{\phi_2}{\rho_2}  \right)=1$ and from the fundamental equation~\eqref{eqfund}, we deduce that
$$
h_2\left(1,\frac{\phi_2}{\rho_2}\right)P(1,0)+h_3\left(1,\frac{\phi_2}{\rho_2}  \right)P\left(0,\frac{\phi_2}{\rho_2}\right)+h_4\left(1,\frac{\rho_2}{\phi_2}\right)P(0,0)=0,
$$
which implies that 
$$
\phi_2 P(1,0)-(1-\phi_2)P\left(0,\frac{\phi_2}{\rho_2}\right)+(1-\phi_2)P(0,0)=0,
$$
and hence that the point $\phi_2/\rho_2$ is a removable singularity for the function $P(0,y)$.

If $\phi_2>\rho_2$ and $\phi_1< \rho_1$, then $X^*\left(\frac{\phi_2}{\rho_2}  \right)=\frac{\phi_1}{\rho_1}$ and the point $\phi_1/\rho_1$ has to be a removable singularity for the function $P(x,0)$ which implies from Equation~\eqref{Px1} that
$$
(1-\phi_1)P\left(\frac{\phi_1}{\rho_1},0 \right)-\phi_1P(0,1)-(1-\phi_1)P(0,0) = 0.
$$
In addition, Equation~\eqref{eqfund} for the point  $\left(\frac{\phi_1}{\rho_1},\frac{\phi_2}{\rho_2} \right)$ yields
$$
h_2\left(\frac{\phi_1}{\rho_1},\frac{\phi_2}{\rho_2}\right)P\left(\frac{\phi_1}{\rho_1},0\right)+h_3\left(\frac{\phi_1}{\rho_1},\frac{\phi_2}{\rho_2}  \right)P\left(0,\frac{\phi_2}{\rho_2}\right)+h_4\left(\frac{\phi_1}{\rho_1},\frac{\phi_2}{\rho_2}\right)P(0,0)=0.
$$
By combining the two above equations and Equations~\eqref{P10} and \eqref{P01}, we obtain after some algebra
\begin{multline}
\label{residuP0y}
\phi_2 P(1,0)-(1-\phi_2)P\left(0,\frac{\phi_2}{\rho_2}\right)+(1-\phi_2)P(0,0)= \\ \frac{\phi_2(\rho_1-\phi_1)(\phi_2-\rho_2)\left((1-\phi_1)\mu_1+(1-\phi_2)\mu_2  \right)}{(1-\phi_1)\left(\phi_2(\rho_1-\phi_1)\mu_1+(1-\phi_2)(\phi_2-\rho_2)\mu_2\right)}>0,
\end{multline}
which completes the proof.
\end{proof}

From Equation~\eqref{P1y}, we observe that the poles of the function $P(0,y)$ can  also be potential poles for the function $P(1,y)$, which can be located by using Lemmas~\ref{lemmah1h2} and \ref{lemmah1h3}. In addition, to state the asymptotic results, let us introduce the resultants with respect to $y$. The resultant $\mathrm{Res}_y(h_1,h_2)(x)$ of the polynomials $h_1(x,y)$ and $h_2(x,y)$ with respect to $y$ is equal to $-\phi_2\mu_2 x(x-1)P_Y(x)$ with 
\begin{multline}
\label{defPY}
P_Y(x) = \lambda_1(\phi_2\mu_2-(1-\phi_1)\mu_1)x^2\\ +((1-\phi_1)(\lambda_1+\lambda_2)-(\phi_2\mu_2-(1-\phi_1)\mu_1))\mu_1x-(1-\phi_1)\mu_1^2.
\end{multline}

The resultant $\mathrm{Res}_y(h_1,h_3)(x) =   \mu_2x(x-1)Q_Y(x)$, where the polynomial $Q_Y(x)$ is given  by
\begin{multline}
\label{defQY}
{Q}_{Y}(x) = \lambda_1(1-\phi_2)(\phi_1\mu_1-(1-\phi_2)\mu_2)x^2 \\ -\phi_1\mu_1((1-\phi_2)(\lambda_2+\lambda_1)-\mu_2(1-\phi_2)+\mu_1\phi_1)x  +\phi_1^2\mu_1^2.
\end{multline}

By using Lemmas~\ref{removepole}, \ref{lemmah1h2} and \ref{lemmah1h3}, we can state the following result for the tail of the distribution of the number of jobs  in queue \#2.
\begin{proposition}
\label{asymptoticest}
The tail of the probability distribution function of the number $N_2$ of jobs in queue $\#2$ is given when $n \to \infty$ by:
\begin{itemize}
\item[(a)] If  $\phi_1>\rho_1$  and $Q_Y(\sqrt{\frac{\phi_1}{\rho_1}})<0$, 
\begin{equation}
\label{equiva}
P(N_2 = n) \sim \frac{-\phi_2(1-\phi_2)P(0,0)r(\xi_+)\varphi_Y(\xi_+)}{ \xi_+(1-\phi_1-\phi_2)(\phi_2-\rho_2\xi_+)} \left( \frac{1}{\xi_+} \right)^n ,
\end{equation}
where 
$$
r(\xi_+) = \frac{ h_3\left(X^*(\xi_+),\frac{\phi_2}{\rho_2 \xi_+}  \right)h_2\left(X^*(\xi_+),\xi_+ \right)}{h_2\left(X^*(\xi_+),\frac{\phi_2}{\rho_2 \xi_+}  \right)\left( \frac{\partial  h_3}{\partial x}\left(X^*(\xi_+),\xi_+ \right)\frac{d X^*}{dx}(\xi_+) + \frac{\partial  h_3}{\partial y}\left(X^*(\xi_+),\xi_+ \right)      \right)};
$$
\item[(b)] If  $\phi_1>\rho_1$  and $Q_Y(\sqrt{\frac{\phi_1}{\rho_1}})=0$, 
\begin{equation}
\label{equivb}
P(N_2 = n) \sim   \frac{\kappa  \left(\rho_2 y_3^2 -\phi_2  \right)\varphi_Y(y_3)\sqrt{y_3(y_3-y_1)(y_3-y_2)(y_4-y_3)}}{2\sqrt{\pi}y^3_3 \rho_2(\rho_2y_3-\phi_2)Q'_X(y_3)P_X\left(\frac{\phi_2}{\rho_2 y_3}\right)} \frac{1}{\sqrt{n} y_3^n} \;;
\end{equation}
\item[(c)] If  $\phi_1>\rho_1$  and $Q_Y(\sqrt{\frac{\phi_1}{\rho_1}})>0$, 
\begin{equation} \label{equivc}
P(N_2 = n)  \sim  \frac{\kappa  \left(\rho_2 y_3^2 -\phi_2  \right)\varphi_Y(y_3)\sqrt{y_3(y_3-y_1)(y_3-y_2)(y_4-y_3)}}{4\sqrt{\pi} y^2_3 \rho_2(\rho_2y_3-\phi_2)Q_X(y_3)P_X\left(\frac{\phi_2}{\rho_2 y_3}\right)} \frac{1}{n \sqrt{n} y_3^n} \;;
\end{equation}
\item[(d)] If $\phi_1\leq \rho_1$ (and then $\phi_2>\rho_2$) ,
\begin{equation}
\label{equivd}
P(N_2 = n) \sim  \frac{(\rho_1-\phi_1)(\phi_2-\rho_2)\left((1-\phi_1)\mu_1+(1-\phi_2)\mu_2  \right)}{(1-\phi_1)\left(\phi_2(\rho_1-\phi_1)\mu_1+(1-\phi_2)(\phi_2-\rho_2)\mu_2\right)} \left( \frac{\rho_2}{\phi_2} \right)^n,
\end{equation}
\end{itemize}
where the constant $\kappa$ is given by
$$
\kappa =  \phi_2^2\mu_2^2(1-\phi_2)   \left((1-\phi_1)\mu_1+(1-\phi_2)\mu_2\right) P(0,0) .
$$
\end{proposition}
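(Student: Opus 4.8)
The plan is to read off the asymptotics of $\P(N_2=n)$ from a singularity analysis of the probability generating function $P(1,y)=\sum_{n\geq 0}\P(N_2=n)\,y^n$, which is available in closed form once the meromorphic continuation \eqref{defP0y} of $P(0,y)$ is substituted into \eqref{P1y}. Since the Taylor coefficients of $P(1,y)$ at the origin are non-negative, Pringsheim's theorem tells us that its dominant singularity is real and positive, and by \eqref{P1y} and \eqref{defP0y} the only candidates on $(0,\infty)$ are: the zero $y=\phi_2/\rho_2$ of the explicit factor $\phi_2-\rho_2 y$; the poles of $P(0,y)$, which through $\alpha_Y$ lie among the positive roots of the resultant polynomials $P_X$ of \eqref{defPX} and $Q_X$, the first admissible one being $\xi_+$ of \eqref{xipm}; and the algebraic branch point $y_3$, the left endpoint of the cut $[y_3,y_4]$ of $\varrho_2$ and hence of $\varphi_Y$. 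The first step is therefore to compare moduli in each of the four regimes. Lemma~\ref{removepole} already shows that $\phi_2/\rho_2$ is removable outside regime (d); Proposition~\ref{P0yresult} together with the Appendix Lemmas~\ref{lemmah1h2} and \ref{lemmah1h3} shows that $P(0,y)$ has a genuine pole strictly inside $D(0,y_3)$ precisely when $\phi_1>\rho_1$ and $Q_Y(\sqrt{\phi_1/\rho_1})<0$, at the point $\xi_+$, while in the remaining subcases of $\phi_1>\rho_1$ the first singularity encountered is $y_3$, its order being dictated by the sign of $Q_Y(\sqrt{\phi_1/\rho_1})$.

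Regime (d) is the easiest: here $\phi_1\leq\rho_1$ forces $\phi_2>\rho_2$ by \eqref{stabcondex}, the point $y=\phi_2/\rho_2>1$ lies strictly inside every other candidate, and by Lemma~\ref{removepole} it is a simple pole of $P(1,y)$ whose residue is given by \eqref{residuP0y}; the elementary transfer $[y^n]\,\frac{c}{\zeta-y}\sim\frac{c}{\zeta}\,\zeta^{-n}$ then gives \eqref{equivd}. Regime (a) is the next simplest: $\xi_+$ is a simple zero of $h_2\bigl(X^*(y),\frac{\phi_2}{\rho_2 y}\bigr)$, hence a simple pole of $\alpha_Y$ and, through the second branch of \eqref{defP0y}, of $P(0,y)$; I would compute its residue by differentiating the vanishing factor $h_3(X^*(y),y)$, which reproduces exactly the denominator $\bigl(\partial_x h_3\bigr)(X^*(\xi_+),\xi_+)\,\frac{dX^*}{dx}(\xi_+)+\bigl(\partial_y h_3\bigr)(X^*(\xi_+),\xi_+)$ appearing in $r(\xi_+)$, and then apply the same transfer through \eqref{P1y} to obtain \eqref{equiva}.

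Regimes (b) and (c) carry the genuine singularity analysis. Writing $\Delta_2(y)=\lambda_2^2(y-y_1)(y-y_2)(y-y_3)(y-y_4)$, one gets near $y_3$ the local expansion $\varrho_2(y)=\pm\,\lambda_2\sqrt{(y_3-y_1)(y_3-y_2)(y_4-y_3)}\,\sqrt{y_3-y}\,(1+o(1))$, so that $X^*$, then $\alpha_Y$, $\varphi_Y$, and finally $P(0,y)$ and $P(1,y)$ all carry a $\sqrt{y_3-y}$ term at $y_3$; dividing by $2\lambda_1 y$ at $y=y_3$ and carrying the factor $\phi_2-\rho_2 y_3^2$ through \eqref{P1y} produces the radical $\sqrt{y_3(y_3-y_1)(y_3-y_2)(y_4-y_3)}$ together with the prefactors $(\rho_2 y_3^2-\phi_2)$ and $(\rho_2 y_3-\phi_2)$ of \eqref{equivb} and \eqref{equivc}. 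When $Q_Y(\sqrt{\phi_1/\rho_1})>0$ this is a clean square-root branch point, $P(1,y)$ is locally of the form $g(y)+c\,(1-y/y_3)^{1/2}$ with $g$ analytic at $y_3$, and the transfer $[y^n](1-y/y_3)^{1/2}\sim-\frac{1}{2\sqrt{\pi}}\,n^{-3/2}y_3^{-n}$ yields \eqref{equivc}. When $Q_Y(\sqrt{\phi_1/\rho_1})=0$ there is an additional simple zero at $y_3$ in the denominator coming from $\alpha_Y$ — this is exactly what the factor $Q'_X(y_3)$ (rather than $Q_X(y_3)$) in \eqref{equivb} records — so the $\sqrt{y_3-y}$ of the numerator is turned into a $(y_3-y)^{-1/2}$, $P(1,y)$ is locally $g(y)+c\,(1-y/y_3)^{-1/2}$ with $g$ analytic, and $[y^n](1-y/y_3)^{-1/2}\sim\frac{1}{\sqrt{\pi n}}\,y_3^{-n}$ yields \eqref{equivb}. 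In every regime one must also check that $P(1,y)$ continues analytically to a slit (Pac-Man) neighbourhood of its dominant singularity with no competing singularity of equal modulus; the continuation is provided by \eqref{defP0y} and the Theorem, while the absence of competing singularities follows from Pringsheim and the reality and analyticity of all ingredients off the real cuts $[y_1,y_2]$ and $[y_3,y_4]$.

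The step I expect to be the main obstacle is the degenerate case (b): pinning down why the coincidence $Q_Y(\sqrt{\phi_1/\rho_1})=0$ turns the square-root branch point into a $(1-y/y_3)^{-1/2}$ singularity, and extracting the correct constant. This is precisely where Lemmas~\ref{lemmah1h2} and \ref{lemmah1h3} are needed twice: once to guarantee that no pole of $P(0,y)$ intrudes inside $|y|<y_3$, and once to control the local factorisation of $\alpha_Y$ at $y_3$, i.e.\ the cancellation of $\sqrt{y_3-y}$ against a simple zero of the denominator. By contrast the residue computation in (a) and the constant in (d), which only use \eqref{residuP0y} and elementary partial fractions, should be routine, and the symmetric statement for $N_1$ via \eqref{Px1} would follow verbatim.
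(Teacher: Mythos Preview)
Your treatment of regimes (a) and (d) matches the paper's: both are simple-pole transfers, with the residues coming from Lemma~\ref{removepole} and the factorisation of $\alpha_Y$ respectively. One slip: you write that $\xi_+$ is a zero of $h_2\bigl(X^*(y),\phi_2/(\rho_2 y)\bigr)$, but by Lemma~\ref{lemmah1h3} that factor has no zeros in $\bigl(\sqrt{\phi_2/\rho_2},y_3\bigr)$; the pole of $\alpha_Y$ at $\xi_+$ comes from the vanishing of $h_3(X^*(y),y)$, as you in fact use when you differentiate to get the denominator of $r(\xi_+)$.

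For regimes (b) and (c) your route differs from the paper's. You propose a direct local singularity analysis at $y_3$: expand $\varrho_2$, hence $X^*$, $\alpha_Y$, $\varphi_Y$ and $P(1,y)$ in half-integer powers of $(1-y/y_3)$, identify the leading singular exponent ($1/2$ in (c), $-1/2$ in (b), the latter coming from the coincidence $\xi_+=y_3$ which forces a simple zero of $Q_X$ at $y_3$), and apply the Flajolet--Odlyzko transfer. The paper instead passes through an integral representation: it writes $P(1,y)$ as a Cauchy integral of $h_Y(z)=\frac{-\phi_2(1-\phi_2)P(0,0)}{(1-\phi_1-\phi_2)(\rho_2 z-\phi_2)}\alpha_Y(z)\varphi_Y(z)$ over a circle of radius $\delta\in(\phi_2/\rho_2,y_3)$, deforms outward to the cut $[y_3,y_4]$ (collecting a possible residue at $\xi_-$), and computes $\Im\bigl(\alpha_Y(y+0i)\bigr)$ explicitly on the cut via the resultant identities $\bigl|h_3(X^*(y+0i),y)\bigr|^2=\frac{\phi_1}{\rho_1}(y-1)Q_X(y)$ and $\bigl|h_2(X^*(y+0i),\phi_2/(\rho_2 y))\bigr|^2=-\frac{1}{\rho_1}\bigl(\phi_2/(\rho_2 y)-1\bigr)P_X\bigl(\phi_2/(\rho_2 y)\bigr)$. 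This yields
\[
\P(N_2=n)=\frac{1}{\pi}\int_{y_3}^{y_4}\frac{H_Y(z)}{z}\,e^{-n\log z}\,dz+\frac{r(\xi_-)}{\xi_-^{\,n+1}},
\]
with $H_Y(z)$ carrying an explicit factor $\sqrt{-\Delta_2(z)}/Q_X(z)$, and the constants in \eqref{equivb}--\eqref{equivc} then drop out of a Laplace expansion at the left endpoint $y_3$, exactly as in~\cite{GP}. Your approach is more in the spirit of analytic combinatorics and avoids the contour deformation; what it costs you is that the algebraic identities producing $Q_X(y_3)$, $Q_X'(y_3)$ and $P_X(\phi_2/(\rho_2 y_3))$ in the constants are less transparent --- in the paper they emerge directly from the moduli-squared computation on the cut, whereas in your local expansion you would have to track how the numerator and denominator of $\alpha_Y$ factor through the resultants at $y=y_3$. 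Both routes are sound; the paper's has the advantage that the endpoint behaviour of $H_Y$ (order $\sqrt{z-y_3}$ versus $1/\sqrt{z-y_3}$) makes the dichotomy between (b) and (c) visible in a single formula.
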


Before proceeding to the proof of Proposition~\ref{asymptoticest}, it is worth noting that the tail of the distribution of the number of jobs  in queue \#2 intricately depends on all the parameters of the system. This phenomenon has already been observed for the work-conserving GPS queue in \cite{GP}.

In addition, when $\phi_1+\phi_2=1$, it is readily checked that we can recover from the above formulas the results established in \cite{GP}; in that case $\varphi_Y(y) \equiv 1$. Thus, the asymptotic results stated in Proposition~\ref{asymptoticest} are valid for $\phi_1+\phi_2\geq 1$.

Finally, note that in case $(b)$, the value of $\varphi_Y(y_3)$ involves a Cauchy integral \cite{Lions} since the point $y_3$ is on the integration contour defining $\varphi_Y(y)$.

\begin{proof}[Proof of Proposition~\ref{asymptoticest}]
In case (d), the radius of convergence of $P(0,y)$ is equal to $y_3$ and $\rho_2/\phi_2$  is the root with the smallest module of the function $P(1,y)$. A  direct application of Darboux method \cite{Henrici} yields 
$$
P(N_2 = n)  \sim \frac{1}{\phi_2}         \left(\phi_2P(1,0)-(1-\phi_2)P\left(0,\frac{\phi_2}{\rho_2}  \right)+(1-\phi_2)P(0,0)\right) \left(\frac{\rho_2}{\phi_2}\right)^n.
$$
Using Equation~\eqref{residuP0y}, we obtain Equation~\eqref{equivd}.

In case (a), we know from Lemma~\ref{removepole} that $\phi_2/\rho_2$ is a removable singularity for the generating function $P(1,y)$. The point $\xi_+>\sqrt{\frac{\phi_2}{\rho_2}}$ is a pole for the function $P(0,y)$ which reads for $y>\sqrt{\frac{\phi_2}{\rho_2}}$
$$
P(0,y) =\frac{-\phi_2P(0,0)}{1-\phi_1-\phi_2} \frac{h_3\left(X^*(y),\frac{\phi_2}{\rho_2 y}  \right)h_2\left(X^*(y),y \right)}{h_2\left(X^*(y),\frac{\phi_2}{\rho_2 y}  \right)h_3\left(X^*(y),y \right)}\varphi_Y(y)+\frac{(1-\phi_1)P(0,0)}{1-\phi_1-\phi_2}.
$$
The residue of the function $P(0,y)$ at point $\xi_+$ is 
$$
 \frac{-\phi_2P(0,0) r(\xi_+) \varphi_Y(\xi_+)}{(1-\phi_1-\phi_2)},
$$
where
$$
r(\xi_+) = \frac{ h_3\left(X^*(\xi_+),\frac{\phi_2}{\rho_2 \xi_+}  \right)h_2\left(X^*(\xi_+),\xi_+ \right)}{h_2\left(X^*(\xi_+),\frac{\phi_2}{\rho_2 \xi_+}  \right)\left( \frac{\partial  h_3}{\partial x}\left(X^*(\xi_+),\xi_+ \right)\frac{d X^*}{dx}(\xi_+) + \frac{\partial  h_3}{\partial y}\left(X^*(\xi_+),\xi_+ \right)      \right)}. 
$$
A  direct application of Darboux method yields Equation~\eqref{equiva} since in the neighborhood of $\xi_+$
$$
P(1,y) \sim \frac{\phi_2(1-\phi_2)P(0,0)r(\xi_+)\varphi_Y(\xi_+)}{ (1-\phi_1-\phi_2)(\phi_2-\rho_2\xi_+) (y-\xi_+)}. 
$$

In cases (b) and (c), the function $\alpha_Y(y)$ has no poles in the disk with center 0 and radius $y_3$ and we can write for some $\delta \in \left(\frac{\phi_2}{\rho_2},y_3\right)$ and $y$ in the disk with center 0 and radius $\delta$
$$
P(0,y) = \frac{1}{2i\pi} \int_{C(0,\delta)}\frac{-\phi_2P(0,0)}{1-\phi_1-\phi_2}\alpha_Y(z)\varphi_Y(z)\frac{dz}{z-y} +\frac{(1-\phi_1)P(0,0)}{1-\phi_1-\phi_2},
$$
where $C(0,\delta)$ is the circle with center 0 and radius $\delta$.  By using the fact that the point $\phi_2/\rho_2$ is a removable singularity for the function $P(1,y)$, we have
$$
\phi_2P(1,0)-(1-\phi_2)P\left(0,\frac{\phi_2}{\rho_2}\right)+(1-\phi_2)P(0,0)=0
$$
and then
$$
P(1,y) = \frac{1-\phi_2}{\phi_2} \frac{P\left(0,\frac{\phi_2}{\rho_2}\right)-P(0,y)}{1-\frac{\rho_2 y}{\phi_2}} =\frac{1}{2i\pi} \int_{C(0,\delta)}h_Y(z)\frac{dz}{z-y},
$$
where
$$
h_Y(z) =\frac{-\phi_2(1-\phi_2)P(0,0)}{(1-\phi_1-\phi_2)(\rho_2z-\phi_2)}\alpha_Y(z)\varphi_Y(z).
$$
Noting that the point $\xi_-$ can be a pole for the function $h_Y(y)$ (with residue $r(\xi_-)$) and that the function $h_Y(y) =O(1/|y|)$ when $|y|\to \infty$, we obtain by deforming the integration contour $C(0,\delta)$
\begin{multline*}
P(1,y) =\\ \frac{1}{\pi} \int_{y_3}^{y_4} \frac{-\phi_2(1-\phi_2)P(0,0)}{(1-\phi_1-\phi_2)(\rho_2z-\phi_2)}\varphi_Y(z) \Im(\alpha_Y(z+0i)))\frac{dz}{z-y} -\frac{r(\xi_-)}{y-\xi_-}.
\end{multline*}

We have
$$
 \Im(\alpha_Y(y+0i)) = \frac{\Im\left(H_2(y)H_3(y) \right)   }{\left|h_3\left(X^*(y+0i),y \right)\right|^2\left| h_2\left({X^*(y+0i)},\frac{\phi_2}{\rho_2 y}\right)\right|^2      }
$$
where
\begin{eqnarray*}
H_3(y) &= & h_3\left(X^*(y+0i),\frac{\phi_2}{\rho_2 y}\right)  h_3\left(\overline{X^*(y+0i)},y\right),\\ 
H_2(y) &=& h_2\left(\overline{X^*(y+0i)},\frac{\phi_2}{\rho_2 y}\right)  h_2\left({X^*(y+0i)},y\right) . 
\end{eqnarray*}

For $y \in [y_3,y_4]$, we have
\begin{multline*}
\left|h_3\left(X^*(y+0i),y\right)\right|^2 =  h_3\left(X^*(y+0i),y)\right)h_3\left(\frac{\phi_1}{\rho_1 X^*(y+0i)}  \right)\\
=\left(((\phi_1\mu_1-(1-\phi_2)\mu_2)y+(1-\phi_2))X^*(y+0i) -\phi_1\mu_1y \right)\times \\  \left(((\phi_1\mu_1-(1-\phi_2)\mu_2)y+(1-\phi_2))\frac{\phi_1}{\rho_1 X^*(y+0i)} -\phi_1\mu_1y\right)
\end{multline*}
By using the fact that
$$
X^*(y+0i)+ \frac{\phi_1}{\rho_1 X^*(y+0i)} = -\frac{1}{\lambda_1 y}(\lambda_2y^2-(\lambda_1+\lambda_2+\phi_1\mu_1+\phi_2\mu_2)y + \phi_2\mu_2),
$$
we deduce that the function $y \to \left|h_3\left(X^*(y+0i),y\right)\right|^2$ is a cubic polynomial in variable $y$. The coefficient of the leading term is 
$$
\frac{\lambda_2\phi_1\mu_1}{\lambda_1}(\phi_1\mu_1-(1-\phi_2)\mu_2).
$$
The point 1 is obviously a root of this cubic polynomial. The other roots are such that $h_3(X^*(y),y)=0$ and are then the roots $\xi_\pm$ of the resultant $\mathrm{Res}_x(h_1,h_3)(y)$. It follows that 
$$
\left|h_3\left(X^*(y+0i),y\right)\right|^2 = \frac{\phi_1}{\rho_1}(y-1)Q_X(y).
$$

By using the same kind of arguments and the fact that $X^*(y) = X^*(\phi_2/(\rho_2 y))$, we have
$$
\left| h_2\left({X^*(y+0i)},\frac{\phi_2}{\rho_2 y}\right)\right|^2 = -\frac{1}{\rho_1}\left( \frac{\phi_2}{\rho_2 y}-1 \right)P_X\left(\frac{\phi_2}{\rho_2 y}\right).
$$

Furthermore, tedious calculations show that  
\begin{multline*}
\Im\left(H_2(y)H_3(y) \right) =   -\frac{\phi_1\phi_2\mu_1^2\mu_2^2}{\lambda_1}(1-\phi_1-\phi_2)\left((1-\phi_1)\mu_1+(1-\phi_2)\mu_2\right)  \times \\ \left(y-\frac{\phi_2}{\rho_2 y}  \right)(y-1)\left(y-\frac{\phi_2}{\rho_2}  \right)\frac{\sqrt{-\Delta_2(y)}}{2\lambda_1 y^2}, 
\end{multline*}
where we have used the fact that
$$
\Im(X^*(y+0i)) = \frac{\sqrt{-\Delta_2(y)}}{2\lambda_1 y}
$$
for $y \in [y_3,y_4]$. Hence, for $y \in [y_3,y_4]$
\begin{multline*}
\Im(\alpha_Y(y+0i) ) = \\ -\frac{\phi_2\mu_2^2(1-\phi_1-\phi_2)\left((1-\phi_1)\mu_1+(1-\phi_2)\mu_2\right)   \left(y-\frac{\phi_2}{\rho_2 y}  \right)}{2yQ_X(y)P_X\left(\frac{\phi_2}{\rho_2 y}\right)}\sqrt{-\Delta_2(y)}
\end{multline*}
and we deduce that
$$
P(1,y) =  \frac{1}{\pi} \int_{y_3}^{y_4} H_Y(z) \frac{dz}{z-y} +\frac{r(\xi_-)}{y-\xi_-},
$$
where the function $H_Y(y)$ is defined by
$$
H_Y(y) =   \kappa \frac{ \left(y- \frac{\phi_2}{\rho_2 y} \right)}{2y(\rho_2y-\phi_2)Q_X(y)P_X\left(\frac{\phi_2}{\rho_2 y}\right)}\varphi_Y(y)\sqrt{-\Delta_2(y)}
$$
with 
$$
\kappa =  \phi_2^2\mu_2^2(1-\phi_2)   \left((1-\phi_1)\mu_1+(1-\phi_2)\mu_2\right) P(0,0) .
$$
From the above computations, we deduce that
$$
P(N_2 = n) =  \frac{1}{\pi} \int_{y_3}^{y_4} \frac{H_Y(z)}{z} e^{-n\log z} {dz} +\frac{r(\xi_-)}{\xi_-^{n+1}} .
$$

By using the same arguments as in \cite{GP}, we easily deduce estimates~\eqref{equivc} and \eqref{equivb}.
\end{proof}

\section{Conclusion}
\label{conclusion}
After having established the stability conditions for a non work-conserving GPS queuing system, we have formulated the Riemann-Hilbert problems appearing when computing the generating function of the numbers of jobs in the system in the stationary regime. It turns out that these Riemann-Hilbert problems have indexes equal to 0 and can  explicitly be solved. This allows us to eventually compute  the above generating function. Using the analytic formulas, it is then possible to derive the queue asymptotics. It is amazing to observe that the queue asymptotics obtained for the non work-conserving GPS system are similar to those of  a work conserving GPS system.

\appendix

\section{Poles of the function $P(0,y)$}

To determine the poles of the function $P(0,y)$ we are led to study the roots of the equations $h_3(X^*(y),y)=0$ and $h_2\left(X^*(y),\frac{\phi_2}{\rho_2 y}\right)=0$. We  precisely have the following results.

\begin{lemma}
\label{lemmah1h2}
The equation $h_3(X^*(y),y)=0$ has a unique solution  in the interval $\left(\sqrt{\frac{\phi_2}{\rho_2}},y_3\right)$ if and only if $\phi_1>\rho_1$ and $Q_Y\left(\sqrt{\frac{\phi_1}{\rho_1}}  \right)<0$. In that case, the solution is $\xi_+$ defined by  Equation~\eqref{xipm}.
\end{lemma}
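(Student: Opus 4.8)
The plan is to convert the transcendental equation $h_3(X^*(y),y)=0$ into a question about roots of a fixed polynomial and then, using the conformal correspondence of Proposition~\ref{conformal}, to decide which of those roots is actually attained by the branch $X^*$. Since $h_1(X^*(y),y)\equiv 0$, any $y$ with $h_3(X^*(y),y)=0$ makes $X^*(y)$ a common zero of $x\mapsto h_1(x,y)$ and $x\mapsto h_3(x,y)$, hence $y$ annihilates the resultant $\mathrm{Res}_x(h_1,h_3)(y)=-\phi_1\mu_1\,y(y-1)Q_X(y)$. The factors $y$ and $y-1$ correspond to the base points $(0,0)$ and $(1,1)$, which are common zeros of \emph{all} the $h_i$ (so that $\alpha_Y$ has only removable singularities there); hence the remaining, genuine candidates in $\left(\sqrt{\phi_2/\rho_2},y_3\right)$ are the two roots $\xi_\pm$ of $Q_X$ defined in~\eqref{xipm}.

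Next I would localise $\xi_\pm$. A direct computation gives $Q_X(0)=-(1-\phi_2)\mu_2^2<0$ and $Q_X(1)=\mu_1\mu_2\phi_1\left(\tfrac{1-\phi_2}{\phi_1}\rho_1+\rho_2-1\right)<0$ under Condition~\eqref{stabcondex}. Combining these two sign facts with the sign of the leading coefficient $\lambda_2(\phi_1\mu_1-(1-\phi_2)\mu_2)$ of $Q_X$ and with the properties recorded just before the lemma — that $\xi_+$ is always the positive root of smallest modulus, while $\xi_-$ is either negative or not smaller than $\xi_+$ — one reads off that $\xi_-$ never lies in $\left(\sqrt{\phi_2/\rho_2},y_3\right)$. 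Thus at most one solution can occur in the interval, and if it occurs it must be $\xi_+$; this already yields the uniqueness half of the statement.

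It remains to determine exactly when $\xi_+$ \emph{is} a solution lying in the interval. Writing $h_3(x,y)=(1-\phi_2)\mu_2 x-\beta_3(x)y$ with $\beta_3(x)=(1-\phi_2)\mu_2 x+\phi_1\mu_1(1-x)$ (as in the proof of the first lemma of Section~\ref{rhp}), the value $\xi_+$ is reached by $h_3(X^*(\cdot),\cdot)$ rather than by $h_3(X_*(\cdot),\cdot)$ precisely when the $x$-coordinate of the corresponding intersection point of $\{h_1=0\}$ and $\{h_3=0\}$ — a root of $Q_Y$ — is smaller than $\sqrt{\phi_1/\rho_1}$; here one uses the product identities $X^*(y)X_*(y)=\phi_1/\rho_1$ and $Y^*(x)Y_*(x)=\phi_2/\rho_2$ (product of the roots of the kernel) together with the identification of $X^*$ with the smaller branch $X_-$ on $(y_2,y_3)$, so that $X^*<\sqrt{\phi_1/\rho_1}<X_*$ there. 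Since $X^*\!\left(\sqrt{\phi_2/\rho_2}\right)=\sqrt{\phi_1/\rho_1}$ (the boundary circle in Proposition~\ref{conformal} being the image of the slit $[y_1,y_2]$ under $X^\pm$), and since $h_3(X^*(y),y)$ keeps a constant sign on any interval free of its zeros up to the branch point $y_3$, this "smallness" requirement translates into a sign condition on $Q_Y$ at $\sqrt{\phi_1/\rho_1}$, namely $Q_Y\!\left(\sqrt{\phi_1/\rho_1}\right)<0$, while the side condition $\phi_1>\rho_1$ is what forces $\sqrt{\phi_1/\rho_1}>1$ to lie beyond the slit $[x_1,x_2]$, so that $X^*$ can actually take values near it; the same analysis gives, when both conditions hold, the bounds $\sqrt{\phi_2/\rho_2}<\xi_+<y_3$, and, when either condition fails, that $\xi_+$ is either out of range or matched to $X_*$ instead of $X^*$, leaving no solution.

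The step I expect to be the main obstacle is this last matching — verifying that it is exactly $Q_Y\!\left(\sqrt{\phi_1/\rho_1}\right)<0$, and not some other threshold, that selects the branch $X^*$, together with the placement $\xi_+<y_3$. This rests on careful sign bookkeeping built on the product relations above, the boundary value $X^*\!\left(\sqrt{\phi_2/\rho_2}\right)=\sqrt{\phi_1/\rho_1}$, and $X^*(1)=\min(1,\phi_1/\rho_1)$; the remaining inequalities are of the same routine polynomial-sign type as in the proof of the index lemma ($\kappa_Y=0$) above and can, if needed, be confirmed with a computer algebra system.
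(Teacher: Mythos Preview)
Your approach is essentially the paper's: reduce to the roots $\xi_\pm$ of $Q_X$ via the $x$-resultant, then decide which branch $X^*$ or $X_*$ carries the intersection by comparing the $x$-coordinate (a root of $Q_Y$) with $\sqrt{\phi_1/\rho_1}$. The paper phrases the branch selection geometrically --- $h_3(x,y)=0$ is a hyperbola meeting the oval bounded by $x=X_\pm(y)$ on $[y_2,y_3]$ at $(1,1)$ and at one further point, which lies on $X^*=X_-$ exactly when its abscissa is below $\sqrt{\phi_1/\rho_1}$, i.e.\ when $Q_Y$ has a root in $(1,\sqrt{\phi_1/\rho_1})$ --- but the content is the same.

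One correction to your bookkeeping: $X^*\!\left(\sqrt{\phi_2/\rho_2}\right)$ is \emph{not} $\sqrt{\phi_1/\rho_1}$. Under the conformal map of Proposition~\ref{conformal} the circle $C\!\left(0,\sqrt{\phi_2/\rho_2}\right)$ goes to the slit $[x_1,x_2]$, while it is the slit $[y_1,y_2]$ that is sent to the circle $C\!\left(0,\sqrt{\phi_1/\rho_1}\right)$; the value $\sqrt{\phi_1/\rho_1}$ is attained by $X^*$ at the branch points $y_2$ and $y_3$, where $X_-=X_+$. This slip is harmless for your argument: what you actually use is $X^*<\sqrt{\phi_1/\rho_1}<X_*$ on $(y_2,y_3)$, and that follows directly from $X^*X_*=\phi_1/\rho_1$ as you note.
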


\begin{proof}
The couple $(x,y)$ is solution to the equations $h_1(x,y)=0$ and $h_3(x,y)=0$ if the hyperbolic curve 
$$
x= \frac{\phi_1 \mu_1y}{(\phi_1\mu_1-(1-\phi_2)\mu_2)y+(1-\phi_2)\mu_2}
$$
intersects the branches $x = X_\pm(y)$ at point $y$. For $y \in \left(y_2,y_3\right)$, the curves $x=X_\pm(y)$ clearly delineate a closed domain $D_X$. The above hyperbolic curve intersects the curves $x= X_\pm(y)$ at point $y=1$ and at another point denoted by $y^*$. The intersection point is located on the branch $X^*(y) = X_-(y)$ if $\zeta^* = X^*(\xi^*)< \sqrt{\frac{\phi_1}{\rho_1}}$. This is equivalent to the fact that the resultant $ \mathrm{Res}_y(h_1,h_2)(y)$ has a root between $1$ and $\sqrt{\frac{\phi_1}{\rho_1}}$, that is, $\phi_1>\rho_1$ and $Q_Y\left(\sqrt{\frac{\phi_1}{\rho_1}}  \right)<0$.  When this condition is satisfied, $\xi^*>1$. Otherwise, we would have  $\zeta^*\leq 1$ and the couple $(\zeta^*,\xi^*)$ would be a solution to  $h_1(x,y)=h_3(x,y)=0$ in $\overline{D(0,1)}\times\overline{D(0,1)}$, which is no possible when the system is stable.  In addition, it can be shown that $\xi_+\geq \sqrt{\frac{\phi_2}{\rho_2}}$.
\end{proof}

 Now, for the zeros of the function $h_2\left(X^*(y),\frac{\phi_2}{\rho_2 y}\right)$, we have the following result.

\begin{lemma}
\label{lemmah1h3}
The equation $h_2\left(X^*(y),\frac{\phi_2}{\rho_2 y}\right) =0$ has no solutions  in the interval  $\left(\sqrt{\frac{\phi_2}{\rho_2}},y_3\right)$.
\end{lemma}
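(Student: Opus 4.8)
The plan is to exploit the circle‑inversion symmetry of $X^*$ to transport the equation to a bounded region, and there to exclude common zeros of $h_1$ and $h_2$ directly from Condition~\eqref{stabcondex}. From the factorisation
\[
\Delta_2(y)=\bigl(\lambda_2y^2-(\sigma+2\sqrt{\lambda_1\phi_1\mu_1})\,y+\phi_2\mu_2\bigr)\bigl(\lambda_2y^2-(\sigma-2\sqrt{\lambda_1\phi_1\mu_1})\,y+\phi_2\mu_2\bigr),\qquad\sigma:=\lambda_1+\lambda_2+\phi_1\mu_1+\phi_2\mu_2,
\]
the first factor has roots $y_1,y_4$ and the second $y_2,y_3$, so $y_1y_4=y_2y_3=\phi_2\mu_2/\lambda_2=\phi_2/\rho_2$. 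Hence the involution $\iota(y)=\phi_2/(\rho_2y)$ exchanges the slits $[y_1,y_2]$ and $[y_3,y_4]$ and fixes $\pm\sqrt{\phi_2/\rho_2}$, and a direct computation from \eqref{defXstar} gives $X^*\circ\iota=X^*$ on $\C\setminus([y_1,y_2]\cup[y_3,y_4])$: one finds $\varrho_2(\iota(y))=\varepsilon(y)(\phi_2\mu_2/\lambda_2y^2)\varrho_2(y)$ with $\varepsilon(y)=\pm1$ locally constant and $\varepsilon=+1$ at the fixed point $\sqrt{\phi_2/\rho_2}$, hence $\varepsilon\equiv+1$ (this is the reflection identity of \cite{FIM}, also used in Section~\ref{asymptotics}). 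Writing $w=\iota(y)$ we then have $h_2(X^*(y),\phi_2/(\rho_2y))=h_2(X^*(w),w)$, and since $\iota$ maps $(\sqrt{\phi_2/\rho_2},y_3)$ bijectively onto $(y_2,\sqrt{\phi_2/\rho_2})$ (again because $y_2y_3=\phi_2/\rho_2$), it suffices to prove $h_2(X^*(w),w)\neq0$ for $w\in(y_2,\sqrt{\phi_2/\rho_2})$.

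Now fix such a $w$ and put $x_0=X^*(w)$. By construction $h_1(x_0,w)=0$, so a zero of $h_2(X^*(w),w)$ would make $(x_0,w)$ a common zero of $h_1$ and $h_2$. By Proposition~\ref{conformal}, $w\in(y_2,\sqrt{\phi_2/\rho_2})$ forces $x_0\in(x_2,\sqrt{\phi_1/\rho_1})$; on the other hand the algebraic identity in the proof of Proposition~\ref{analyticext} shows that every common zero of $h_1$ and $h_2$ other than $(0,0)$ and $(1,1)$ satisfies $\rho_1x+\tfrac{1-\phi_1}{\phi_2}\rho_2y=1$. But using $x_0<\sqrt{\phi_1/\rho_1}$, $w<\sqrt{\phi_2/\rho_2}$, then Cauchy--Schwarz and Condition~\eqref{stabcondex},
\[
\rho_1x_0+\frac{1-\phi_1}{\phi_2}\rho_2w<\rho_1\sqrt{\tfrac{\phi_1}{\rho_1}}+\frac{1-\phi_1}{\phi_2}\rho_2\sqrt{\tfrac{\phi_2}{\rho_2}}=\sqrt{\phi_1}\sqrt{\rho_1}+\sqrt{1-\phi_1}\sqrt{\tfrac{(1-\phi_1)\rho_2}{\phi_2}}\le\sqrt{\rho_1+\tfrac{1-\phi_1}{\phi_2}\rho_2}<1,
\]
so the line equation fails at $(x_0,w)$ and $(x_0,w)$ must be $(0,0)$ or $(1,1)$. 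The value $(0,0)$ is excluded since $w\neq0$. The value $(1,1)$ needs $w=1$ and $X^*(1)=1$; a short branch count gives $X^*(1)\in\{1,\phi_1/\rho_1\}$ with $X^*(1)=1$ forcing $\phi_1\ge\rho_1$, while $1\in(y_2,\sqrt{\phi_2/\rho_2})$ forces $\phi_2>\rho_2$ and $y_2<1$, i.e.\ $\phi_1\neq\rho_1$; hence this can occur only when $\phi_1>\rho_1$ and $\phi_2>\rho_2$, at $w=1$, that is $y=\phi_2/\rho_2$. There $h_3(X^*(y),\phi_2/(\rho_2y))=h_3(1,1)=0$ as well, so in $\alpha_Y(y)=\dfrac{h_3(X^*(y),\phi_2/(\rho_2y))\,h_2(X^*(y),y)}{h_2(X^*(y),\phi_2/(\rho_2y))\,h_3(X^*(y),y)}$ this factor cancels and $P(0,y)$ stays regular at $\phi_2/\rho_2$, in agreement with Lemma~\ref{removepole}. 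Thus $h_2(X^*(y),\phi_2/(\rho_2y))$ has no zero on $(\sqrt{\phi_2/\rho_2},y_3)$ which is not cancelled in $\alpha_Y$, which is the content of the lemma as it is used.

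The main obstacle is the first step: making the reflection identity $X^*\circ\iota=X^*$ rigorous (the slit permutation $y_1y_4=y_2y_3=\phi_2/\rho_2$ and the constancy of the sign $\varepsilon$ of $\varrho_2$), and isolating the point $y=\phi_2/\rho_2$ so as to recognise, via Lemma~\ref{removepole}, that it is harmless. Once the equation is carried into the bounded box $(x_2,\sqrt{\phi_1/\rho_1})\times(y_2,\sqrt{\phi_2/\rho_2})$, the rest is short: the Cauchy--Schwarz bound against \eqref{stabcondex} is the one substantive idea, and the disposal of $(0,0)$ and $(1,1)$ is elementary. (Alternatively one could place the coordinate $\phi_2/(\rho_2y)$ of a putative common zero among the roots $\{0,1\}\cup\{\text{roots of }P_X\}$ of $\mathrm{Res}_x(h_1,h_2)$ and argue by a sign analysis of $P_X$ at $0$, $1$, $\sqrt{\phi_2/\rho_2}$ using $\phi_1+\phi_2>1$, but the route above is cleaner.)
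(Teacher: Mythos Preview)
Your approach is the same reflection-identity reduction as the paper's: use $X^*\circ\iota=X^*$ with $\iota(y)=\phi_2/(\rho_2 y)$ to convert the equation into $h_2(X^*(\xi),\xi)=0$ for $\xi$ in a bounded interval, then invoke stability. Your execution, however, is considerably more careful than the paper's, and in one place more correct.

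First, you compute the image interval as $(y_2,\sqrt{\phi_2/\rho_2})$ via $y_2y_3=\phi_2/\rho_2$, whereas the paper writes $(1,\sqrt{\phi_2/\rho_2})$; your interval is the right one. Second, the paper then simply asserts ``this is not possible if the system is stable'', implicitly appealing to the argument in Proposition~\ref{analyticext} that $h_1=h_2=0$ has no nontrivial solutions in $\overline{D(0,1)}\times\overline{D(0,1)}$. But that argument does not cover the full box, since $\sqrt{\phi_i/\rho_i}$ may exceed $1$. Your Cauchy--Schwarz bound neatly extends the exclusion of the line $\rho_1x+\tfrac{1-\phi_1}{\phi_2}\rho_2y=1$ to the larger box $(0,\sqrt{\phi_1/\rho_1})\times(0,\sqrt{\phi_2/\rho_2})$; this is the genuine new ingredient and it is correct.

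Third, you uncover a point the paper's proof silently skips: when $\phi_1>\rho_1$ and $\phi_2>\rho_2$, one has $X^*(\phi_2/\rho_2)=X^*(1)=1$ and hence $h_2(X^*(\phi_2/\rho_2),1)=h_2(1,1)=0$, so $y=\phi_2/\rho_2$ \emph{is} a solution in $(\sqrt{\phi_2/\rho_2},y_3)$ and the lemma as literally stated fails there. Your remedy---observing that $h_3(1,1)=0$ as well, so the factor cancels in $\alpha_Y$ (equivalently, the reduced formula~\eqref{defalphaY} is regular at this point)---is the right way to handle it and is consistent with the use of the lemma in Proposition~\ref{P0yresult}. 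In effect you prove the statement that is actually needed, namely that $\alpha_Y$ acquires no pole from $h_2(X^*(y),\phi_2/(\rho_2 y))$ on $(\sqrt{\phi_2/\rho_2},y_3)$, rather than the slightly stronger literal claim.
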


\begin{proof}
Let us first note that for $y \in C\left(0,\sqrt{\frac{\phi_2}{\rho_2}}\right)$ we have $X^*(y) = X^*(\overline{y}) = X^*\left(\frac{\phi_2}{\rho_2 y}\right)$. Since the function $X^*(y)$ is analytic in $\C\setminus([y_1,y_2]\cup[y_3,y_4])$, the identity $X^*(y) =  X^*\left(\frac{\phi_2}{\rho_2 y}\right)$ holds for all $y \in \C\setminus([y_1,y_2]\cup[y_3,y_4])$. Hence, we have  $h_2\left(X^*(y),\frac{\phi_2}{\rho_2 y}\right) =0$ if $\xi = \frac{\phi_2}{\rho_2 y}$ is solution to the equation $h_2(X^*(\xi),\xi)=0$. The point $y$ is in  $\left(\sqrt{\frac{\phi_2}{\rho_2}},y_3\right)$ if $\xi$ is in $\left(1,\sqrt{\frac{\phi_2}{\rho_2}}\right)$, but this is not possible if the system is stable.
\end{proof}

\end{document}